\documentclass[11pt]{amsart}
\usepackage{amssymb,amsmath,fullpage,times}
\usepackage{mathrsfs}
\usepackage{amssymb}
\usepackage{graphics}
\usepackage{graphicx}
\usepackage{subfigure}
\usepackage{float}
\usepackage{bm}
\usepackage{amsfonts}
\usepackage{color}

\headheight 0in \headsep 0in \evensidemargin=0cm
\oddsidemargin=0cm \textwidth=17.0cm \textheight=22.0cm

\allowdisplaybreaks

\newtheorem{theorem}{Theorem}

\newtheorem{lemma}[theorem]{Lemma}

\newcommand{\mi}{\mathrm{i}}

\def\lam{\lambda}

\def\Re{\mathrm{Re}}
\def\lim{\mathrm{lim}}

\def\det{\mathrm{det}}
\def\arg{\mathrm{arg}}

\def\eref#1{(\ref{#1})}

\setcounter{MaxMatrixCols}{15}
\usepackage[square,numbers,sort&compress]{natbib}
\bibliographystyle{plain}

\def\Im{\mathrm{Im}}

\begin{document}

\date{}

\title{Darboux transformation and soliton solutions \\ of the semi-discrete massive Thirring model}
\author{Tao Xu}
\address[Tao Xu]{State Key Laboratory of Heavy Oil Processing, China University of Petroleum, Beijing 102249, China and
College of Science, China University of Petroleum, Beijing 102249, China}
\author{Dmitry E. Pelinovsky}
\address[Dmitry E. Pelinovsky]{Department of Mathematics and Statistics, McMaster University, Hamilton, Ontario, Canada, L8S 4K1 and
Department of Applied Mathematics, Nizhny Novgorod State Technical University, 24 Minin street, 603950 Nizhny Novgorod, Russia}

\maketitle

\begin{abstract}
A one-fold Darboux transformation between solutions of the semi-discrete massive Thirring model is derived
using the Lax pair and dressing methods. This transformation is used to find the exact expressions for 
soliton solutions on zero and nonzero backgrounds. It is shown that the discrete solitons have the same 
properties as solitons of the continuous massive Thirring model.
\end{abstract}



\section{Introduction}

The massive Thirring model (MTM) in laboratory coordinates is an example of the nonlinear Dirac equation
arising in two-dimensional quantum field theory \cite{Thirring}, optical Bragg gratings \cite{Sipe}, and
diatomic chains with periodic couplings \cite{Alexeeva}. This model received much of attention
because of its integrability \cite{Mikhailov} which was used to study the inverse scattering \cite{KN,KMI,KW,KM,PelSaal,Villarroel,W},
soliton solutions \cite{O,BG87,BG91,BG93}, spectral and orbital stability of solitons \cite{KL,Yusuke1,Yusuke2},
and construction of rogue waves \cite{Deg}.

Several integrable semi-discretizations of the MTM in characteristic coordinates were proposed in the literature \cite{Nijhoff1,Nijhoff2,Tsuchida,Tsuchida2010,Tsuchida-preprint} by discretizing one of the two characteristic coordinates.
These semi-discretizations are not relevant for the time-evolution problem related to the MTM in laboratory coordinates.
It was only recently \cite{JoshiJPA} when the integrable semi-discretization of the MTM in laboratory coordinates was derived.
The corresponding semi-discrete MTM is written as the following system of three coupled equations:
\begin{equation}
\label{MTM-discrete}
\left\{ \begin{array}{l}
\displaystyle
4 \mi \frac{d U_n}{dt} + Q_{n+1} + Q_n  + \frac{2 \mi}{h} (R_{n+1}-R_n) + U_n^2 (\bar{R}_n + \bar{R}_{n+1}) \\
\qquad \qquad - U_n (|Q_{n+1}|^2 + |Q_n|^2 + |R_{n+1}|^2 + |R_n|^2) - \frac{\mi h}{2} U_n^2 (\bar{Q}_{n+1}-\bar{Q}_n)  = 0, \\
\displaystyle
-\frac{2 \mi}{h} (Q_{n+1}-Q_n) + 2 U_n - |U_n|^2 (Q_{n+1} + Q_n) = 0, \\
\displaystyle
R_{n+1} + R_n - 2 U_n + \frac{\mi h}{2} |U_n|^2 (R_{n+1} - R_n) = 0, \end{array} \right.
\end{equation}
where $h$ is the lattice spacing of the spatial discretization and $n$ is the discrete lattice variable. 
$\bar{R}_n$ and $\bar{Q}_n$ denote the complex conjugate of $R_n$ and $Q_n$ respectively.
Only the first equation of the system (\ref{MTM-discrete}) represents the time evolution problem, whereas
the other two equations represent the constraints which define components of
$\{ R_n \}_{n \in \mathbb{Z}}$ and $\{ Q_n \}_{n \in \mathbb{Z}}$
in terms of $\{ U_n \}_{n \in \mathbb{Z}}$ instantaneously in time $t$.

In the continuum limit $h \to 0$, the slowly varying solutions to the system (\ref{MTM-discrete}) can be represented by
$$
U_n(t) = U(x=hn,t), \quad R_n(t) = R(x = hn,t), \quad Q_n(t) = Q(x=nh,t),
$$
where the continuous variables satisfy the following three equations:
\begin{equation}
\label{MTM-continuous}
\left\{ \begin{array}{l}
\displaystyle
2 \mi \frac{\partial U}{\partial t} + \mi \frac{\partial R}{\partial x} + Q + U^2 \bar{R} - U (|Q|^2 + |R|^2) = 0, \\ [2mm]
\displaystyle
-\mi \frac{\partial Q}{\partial x} + U - |U|^2 Q = 0, \\ [2mm]
\displaystyle
R - U = 0. \end{array} \right.
\end{equation}
The system (\ref{MTM-continuous}) in variables $U(x,t) = u(x,t-x)$ and $Q(x,t) = v(x,t-x)$ yields
the continuous MTM system in the form:
\begin{equation}
\left\{
\begin{aligned}
\mi \left( \frac{\partial u}{\partial t}+\frac{\partial u}{\partial x} \right) + v =|v|^2u,\\
\mi \left( \frac{\partial v}{\partial t}-\frac{\partial v}{\partial x} \right) + u =|u|^2v.\\
\end{aligned}
\right.
\label{MTM}
\end{equation}

It is shown in \cite{JoshiJPA} that the semi-discrete MTM system (\ref{MTM-discrete}) is the
compatibility condition
\begin{equation}
\label{Lax-discrete}
\frac{d}{dt} N_n(\lambda) = P_{n+1}(\lambda) N_n(\lambda) - N_n(\lambda) P_n(\lambda),
\end{equation}
of the following Lax pair of two linear equations:
\begin{subequations}
\begin{align}
& \Phi_{n+1}(\lambda) = N_n(\lambda) \Phi_n(\lambda),  \quad
N_n(\lambda) =
\begin{pmatrix}
 \lambda +\frac{2 \mi}{h \lambda} \left( \frac{1+\frac{\mi}{2} h |U_n|^2}{1-\frac{\mi}{2} h |U_n|^2} \right) &
 \frac{2 U_n}{1-\frac{\mi}{2} h |U_n|^2} \\[2mm] \frac{2 \bar{U}_n}{1-\frac{\mi}{2} h |U_n|^2} & \frac{2 \mi}{h \lambda}
 - \lambda \left( \frac{1 + \frac{\mi}{2} h |U_n|^2}{1-\frac{\mi}{2} h |U_n|^2} \right) \\
\end{pmatrix},  \label{LP1a} \\
& \frac{d}{dt} \Phi_n(\lambda) = P_n(\lambda) \Phi_n(\lambda), \quad
P_n(\lambda) = \frac{\mi}{2} \begin{pmatrix}
 \lam ^2-|R_n|^2 &  \lam
   R_n- Q_n\lam^{-1} \\
 \lam \bar{R}_n - \bar{Q}_n \lam^{-1} & |Q_n|^2- \lam^{-2}
\end{pmatrix}, \label{LP1b}
\end{align} \label{LP1}
\end{subequations}
\hspace{-0.25cm} where $\Phi_n(\lambda) \in \mathbb{C}^2$ is defined for $n \in \mathbb{Z}$ and $\lam$ is a spectral parameter.

Because the passage from the discrete system (\ref{MTM-discrete}) to the continuum limit (\ref{MTM})
involves the change of the coordinates $U(x,t) = u(x,t-x)$ and $Q(x,t) = v(x,t-x)$,
the initial-value problem for the semi-discrete MTM system (\ref{MTM-discrete}) does not represent the initial-value problem
for the continuous MTM system (\ref{MTM}) in time variable $t$. In addition, 
numerical explorations of the semi-discrete system (\ref{MTM-discrete}) are challenging
because the last two constraints in the system (\ref{MTM-discrete})
may lead to appearance of bounded but non-decaying sequences $\{ R_n \}_{n \in \mathbb{Z}}$ and $\{ Q_n \}_{n \in \mathbb{Z}}$
in response to the bounded and decaying sequence $\{ U_n \}_{n \in \mathbb{Z}}$. On the other hand, 
since the semi-discrete MTM system (\ref{MTM-discrete}) has the Lax pair of linear equations (\ref{LP1}),
it is integrable by the inverse scattering transform method which implies existence of infinitely many
conserved quantities, exact solutions, transformations between different solutions, and
reductions to other integrable equations \cite{Joshi-book}. These properties of integrable systems
were not explored for the semi-discrete MTM system (\ref{MTM-discrete}) in the previous work \cite{JoshiJPA}.

The purpose of this work is to derive the one-fold Darboux transformation between solutions of the semi-discrete MTM
system (\ref{MTM-discrete}). We employ the Darboux transformation in order to generate one-soliton
and two-soliton solutions on zero background in the exact analytical form. By looking at the continuum limit $h \to 0$,
we show that the discrete solitons share many properties with their continuous counterparts.
We also construct one-soliton solutions on a nonzero constant background.
Further properties of the model, e.g. conserved quantities and solvability of the initial-value
problem, are left for further studies.

The following theorem represents the main result of this work.

\begin{theorem}  \label{theorem-main}
Let $\Phi_n(\lam_1) = (f_n,g_n)^T$ be a nonzero solution of the Lax pair~\eref{LP1} with $\lam=\lam_1$
and $(U_n,R_n,Q_n)$ be a solution of the semi-discrete MTM system (\ref{MTM-discrete}).
Another solution of the semi-discrete MTM system (\ref{MTM-discrete}) is given by
\begin{subequations}
\begin{align}
& U^{[1]}_n = -\frac{2\,\mi (\bar{\lam}_1 |f_n|^2 + \lam_1 |g_n|^2) U_n - h |\lam_1|^2 (\lam_1 |f_n|^2 + \bar{\lam}_1 |g_n|^2) U_n +
2\,\mi (\lam_1^2 - \bar{\lam}_1^2) f_n \bar{g}_n}{2\,\mi (\lam_1 |f_n|^2 + \bar{\lam}_1 |g_n|^2) - h |\lam_1|^2 (\bar{\lam}_1 |f_n|^2 + \lam_1 |g_n|^2)  +
h (\lam_1^2 - \bar{\lam}_1^2) \bar{f}_n g_n U_n},  \label{PoTr1}  \\
& R^{[1]}_{n} = -\frac{\left( \bar{\lambda}_1 |f_{n}|^2 + \lambda_1 |g_{n}|^2 \right) R_n
+ \left(\lambda_1^2-\bar{\lambda}_1^2\right) f_{n} \bar{g}_{n}}{\lambda_1 |f_{n}|^2 + \bar{\lambda}_1 |g_{n}|^2}, \label{PoTr2}\\
& Q^{[1]}_{n} = -\frac{|\lam_1|^2 \left( \lambda_1 |f_{n}|^2+\bar{\lambda}_1 |g_{n}|^2 \right) Q_n + \left(\lambda_1^2-\bar{\lambda}_1^2\right)
f_{n} \bar{g}_{n}}{|\lam_1|^2(\bar{\lam}_1 |f_{n}|^2+\lambda_1 |g_{n}|^2)}. \label{PoTr3}
\end{align}
\label{PoTr}
\end{subequations}
\end{theorem}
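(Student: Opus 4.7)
The plan is to construct the transformation via the classical dressing method: one seeks a Darboux matrix $T_n(\lambda)$ such that the vector $\Phi^{[1]}_n(\lambda) = T_n(\lambda) \Phi_n(\lambda)$ solves the Lax pair (\ref{LP1}) with $(U_n,R_n,Q_n)$ replaced by new potentials $(U^{[1]}_n, R^{[1]}_n, Q^{[1]}_n)$. This requirement is equivalent to the two intertwining identities
\[
T_{n+1}(\lambda)\, N_n(\lambda) = N^{[1]}_n(\lambda)\, T_n(\lambda), \qquad \frac{d T_n(\lambda)}{dt} + T_n(\lambda)\, P_n(\lambda) = P^{[1]}_n(\lambda)\, T_n(\lambda),
\]
in which $N^{[1]}_n$ and $P^{[1]}_n$ carry the transformed potentials. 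Since $N_n$ is a Laurent polynomial of degrees $(1,-1)$ in $\lambda$ and $P_n$ of degrees $(2,-2)$, the minimal non-trivial ansatz is a diagonal part $\mathrm{diag}(a_n\lambda+b_n\lambda^{-1},\, e_n\lambda + f_n\lambda^{-1})$ plus $\lambda$-independent off-diagonal entries $c_n,d_n$, giving six unknowns.

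The next step is to fix these unknowns from kernel conditions coming from the symmetries of the Lax matrices. Both $N_n(\lambda)$ and $P_n(\lambda)$ possess an involution of the form $\lambda \mapsto \bar\lambda$ combined with a matrix conjugation by $\sigma_1$, so that if $(f_n,g_n)^T$ solves (\ref{LP1}) at $\lambda = \lambda_1$, then $(\bar g_n, \bar f_n)^T$ (up to a fixed rescaling) solves it at $\lambda = \bar\lambda_1$. Demanding $T_n(\lambda_1)(f_n,g_n)^T = 0$ and $T_n(\bar\lambda_1)(\bar g_n,\bar f_n)^T = 0$ produces four linear equations for the six coefficients, which after normalization determines $T_n(\lambda)$ up to scalar and yields rational expressions with denominator exactly $\lambda_1|f_n|^2 + \bar\lambda_1|g_n|^2$, matching the structure visible in formulas (\ref{PoTr}).

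With $T_n(\lambda)$ in hand, the transformed potentials are read off by matching $\lambda$-powers in the intertwining identities. The coefficient of $\lambda^0$ in the off-diagonal entries of the spatial intertwining $T_{n+1} N_n = N^{[1]}_n T_n$ produces formula (\ref{PoTr1}) for $U^{[1]}_n$, while the coefficients of $\lambda^{+1}$ and $\lambda^{-1}$ in the off-diagonal entries of the temporal intertwining yield (\ref{PoTr2}) for $R^{[1]}_n$ and (\ref{PoTr3}) for $Q^{[1]}_n$ respectively. The diagonal coefficients encode the scalar constraint $|U^{[1]}_n|^2$, $|R^{[1]}_n|^2$, $|Q^{[1]}_n|^2$, and these should be checked for consistency.

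The main obstacle is not the construction of $T_n$ but the algebraic verification that the intertwining identities actually produce matrices $N^{[1]}_n,P^{[1]}_n$ of the prescribed semi-discrete MTM form; a priori they are merely Laurent polynomials of the right degree. Concretely, one must show that the diagonal $\lambda^{\pm 1}$-terms of $T_{n+1} N_n T_n^{-1}$ and the diagonal $\lambda^{\pm 2}$-terms of $(\partial_t T_n + T_n P_n)T_n^{-1}$ combine into the exact rational expressions involving $|U^{[1]}_n|^2$, $|R^{[1]}_n|^2$, and $|Q^{[1]}_n|^2$ as dictated by (\ref{LP1a})--(\ref{LP1b}), and simultaneously that the two algebraic constraints of (\ref{MTM-discrete}) hold for the new triple. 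These reductions boil down to Gram-type identities among $f_n,g_n,\lambda_1,\bar\lambda_1$ together with the fact that $(f_n,g_n)^T$ satisfies the Lax pair at $\lambda_1$; once these are carried out, the zero-curvature compatibility (\ref{Lax-discrete}) for $N^{[1]}_n,P^{[1]}_n$ follows automatically, completing the proof.
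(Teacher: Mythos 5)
Your overall strategy coincides with the paper's: the same intertwining identities, a Laurent-polynomial ansatz for $T_n(\lambda)$ with $\lambda$-independent off-diagonal entries, kernel conditions at $\lambda_1$ and $\bar{\lambda}_1$ obtained from the conjugation symmetry of the Lax pair, and the new potentials read off from the $\lambda$-expansion of $T_{n+1}N_nT_n^{-1}$ and $(\frac{d}{dt}T_n + T_nP_n)T_n^{-1}$. However, two concrete points would derail the computation as you have set it up. First, the symmetry is stated with the wrong sign: for this Lax pair the kernel vector at $\lambda=\bar{\lambda}_1$ is proportional to $(-\bar{g}_n,\bar{f}_n)^T$, not $(\bar{g}_n,\bar{f}_n)^T$; the latter is (up to the factor $(-1)^n\Omega_n$) the solution at $\lambda=-\bar{\lambda}_1$, see Lemma \ref{lemma-1} and \eref{other-solutions}. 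The relative minus sign cannot be absorbed into a scalar, and solving your four kernel equations with $(\bar{g}_n,\bar{f}_n)^T$ gives, for instance, $a_n=(\lambda_1|f_n|^2-\bar{\lambda}_1|g_n|^2)/(\lambda_1|g_n|^2-\bar{\lambda}_1|f_n|^2)$ in place of the correct $-\bar{\Delta}_n/\Delta_n$ with $\Delta_n=\bar{\lambda}_1|f_n|^2+\lambda_1|g_n|^2$, so the resulting transformation formulas would not be \eref{PoTr}. Moreover the prefactor is not a ``fixed rescaling'': it is a function $\Omega_n(t)$ obeying $\Omega_{n+1}=-S_n\Omega_n$ and $\frac{d\Omega_n}{dt}=M_n\Omega_n$ as in \eref{tech-2}; this cancels in the kernel conditions but enters essentially when the symmetry is used again to evaluate the products $N_n\,{\rm ad}\hat{T}_n$ and $T_nP_n\,{\rm ad}\hat{T}_n$.

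Second, your six-coefficient ansatz is underdetermined by the four kernel conditions: they cut out a two-parameter family of matrices, and ``up to scalar'' removes only one parameter. The residual freedom is the ratio of the two leading diagonal coefficients ($a_{1,n}$ and $d_{-1,n}$ in the paper's notation), and two further inputs are needed to fix it: the $\lambda^{\pm 2}$ and $\lambda^{\pm 3}$ coefficients of the intertwining relations force these coefficients to be constant in both $n$ and $t$, and the normalization $a_{1,n}=1$, $d_{-1,n}=|\lambda_1|^2$ is then what makes $N_n^{[1]}$ reproduce the exact structure of \eref{LP1a} (coefficient $1$ of $2\mi/(h\lambda)$ in the $(2,2)$ entry, equal diagonal factors $\delta_1=\delta_4$, and the consistency relations $Z_n=\bar{Y}_n$, $W_n=Y_nZ_n$). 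You correctly identify this final matching as the real labor of the proof --- the paper itself resorts to symbolic computation for part of it --- but without first pinning down the leading coefficients, the ``reading off'' step does not single out the formulas \eref{PoTr}.
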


Theorem \ref{theorem-main} is proven in Section \ref{sec-one-fold} using the Lax pair
(\ref{LP1}) and the dressing methods. One-soliton and two-soliton
solutions on zero background are obtained in Section \ref{sec-soliton-zero}.
One-soliton solutions on a nonzero constant background are constructed in Section 4.
Both zero and nonzero constant backgrounds are modulationally stable in the evolution
of the semi-discrete MTM system (\ref{MTM-discrete}).
A summary and further directions are discussed in Section 5.

\section{Proof of the one-fold Darboux transformation}
\label{sec-one-fold}

The one-fold Darboux transformation takes an abstract form (see, e.g., \cite{Gu}):
\begin{align}
\Phi^{[1]}(\lambda) = T(\lambda) \Phi(\lambda), \label{eigenTran}
\end{align}
where $T(\lambda)$ is the Darboux matrix, $\Phi(\lambda)$ is a solution
to the system (\ref{LP1}), whereas $\Phi^{[1]}(\lambda)$ is a solution
of the transformed system
\begin{align}
\Phi^{[1]}_{n+1}(\lambda) = N^{[1]}_n(\lambda) \Phi^{[1]}_n(\lambda), \quad
\frac{d}{dt} \Phi^{[1]}_n(\lambda) = P^{[1]}_n(\lambda) \Phi^{[1]}_n(\lambda), \label{LP2}
\end{align}
with $N^{[1]}_{n}(\lambda)$ and $P^{[1]}_{n}(\lambda)$ having the same form as $N_n(\lambda)$ and $P_n(\lambda)$
except that the potentials $\big(U_n, Q_{n}, R_n\big)$ are replaced by
$\big(U^{[1]}_n, Q^{[1]}_{n}, R^{[1]}_n\big)$. By substituting (\ref{eigenTran})
into the linear equations (\ref{LP2}) and using the linear equations
(\ref{LP1}), we obtain the following system of equations for the Darboux matrix $T(\lambda)$:
\begin{subequations}
\begin{align}
& T_{n+1}(\lambda) N_{n}(\lambda) = N_{n}^{[1]}(\lambda) T_{n}(\lambda), \label{DTinvariancea} \\
& \frac{d}{d t} T_n(\lambda) + T_{n}(\lambda) P_n(\lambda) = P_{n}^{[1]}(\lambda) T_{n}(\lambda). \label{DTinvarianceb}
\end{align}
\label{DTinvariance}
\end{subequations}
\hspace{-0.2cm} Since $N_n(\lambda)$ and $P_n(\lambda)$ in \eref{LP1} contain both the positive and negative powers of $\lam$,
we take the one-fold Darboux matrix $T(\lambda)$ in the following form (used in \cite{XuAML2017}
in the context of the semi-discrete nonlocal nonlinear Schr\"{o}dinger equation):
\begin{equation}
T_{n}(\lam;t)=\left(\begin{array}{cc}
\sum\limits_{l=-1}^1 a_{l,n}(t)\lambda^l & \sum\limits_{l=-1}^1 b_{l,n}(t)\lambda^l\\
\sum\limits_{l=-1}^1 c_{l,n}(t)\lambda^l & \sum\limits_{l=-1}^1 d_{l,n}(t)\lambda^l
\end{array}\right), \label{DarTa}
\end{equation}
where the coefficients are to be determined. Before further work, we shall simplify the Darboux matrix
in (\ref{DarTa}) by using some constraints following from the system (\ref{DTinvariance}).
Expanding Eq.~\eref{DTinvarianceb} in powers of $\lam$ and
equating the coefficients of $\lam^3$ and $\lam^{-3}$ to $0$, we verify that
\begin{align}
b_{1,n}= c_{1,n}= b_{-1,n}= c_{-1,n}=0.    \label{Exp3b}
\end{align}
Collecting coefficients of other powers of $\lam$ yields the following
system of equations:
\begin{subequations}
\begin{align}
&\lam^2:   \,\,\,   a_{1,n} R_n - b_{0,n} - d_{1,n} R^{[1]}_n =0, \label{Exp2c} \\
&\lam^2:   \,\,\,  a_{1,n} \bar{R}^{[1]}_n -c_{0,n}- d_{1,n} \bar{R}_n =0, \label{Exp2d} \\
&\lam^{-2}:   \,\,\,  a_{-1,n} Q_n + b_{0,n} - d_{-1,n} Q^{[1]}_n= 0, \label{Exp2g} \\
&\lam^{-2}:   \,\,\,   a_{-1,n} \bar{Q}^{[1]}_n +c_{0,n} - d_{-1,n} \bar{Q}_n =0, \label{Exp2h} \\
& \lam^1:  \,\,\,  a_{0,n}R_n -  d_{0,n}R^{[1]}_n =0, \label{Exp1c} \\
& \lam^1:  \,\,\,  - a_{0,n} \bar{R}^{[1]}_n + d_{0,n} \bar{R}_n = 0,  \label{Exp1d} \\
&\lam^1:   \,\,\,  \big(|R^{[1]}_n|^2 - |R_n|^2\big)  a_{1,n}+ \bar{R}_n b_{0,n}- R^{[1]}_n
   c_{0,n} -2 \mi \frac{d a_{1,n}}{dt}=0, \label{Exp1a} \\
& \lam^1:   \,\,\,  R_n c_{0,n} - \bar{R}^{[1]}_n b_{0,n}+ \big(|Q_n|^2 -
   |Q^{[1]}_n|^2\big) d_{1,n} -2\,\mi \frac{d d_{1,n}}{dt}=0,  \label{Exp1b} \\
& \lam^{-1}: \,\,\,  a_{0,n} Q_n-d_{0,n} Q^{[1]}_n=0,  \label{Exp1e} \\
& \lam^{-1}: \,\,\, a_{0,n} \bar{Q}^{[1]}_n-d_{0,n} \bar{Q}_n=0, \label{Exp1f} \\
& \lam^{-1}: \,\,\, a_{-1,n}\big(|R^{[1]}_n|^2 -|R_n|^2 ) -b_{0,n} \bar{Q}_n + c_{0,n} Q^{[1]}_n -2 \mi \frac{d a_{-1,n}}{dt} = 0, \label{Exp1k}\\
& \lam^{-1}: \,\,\,  d_{-1,n}\big(|Q_n|^2 - |Q^{[1]}_n|^2 \big)-c_{0,n} Q_n  + b_{0,n} \bar{Q}^{[1]}_n  -2 \mi \frac{d d_{-1,n}}{dt}=0, \label{Exp1l} \\
& \lam^0: \,\,\, a_{0,n}\big(|R^{[1]}_n|^2 -|R_n|^2\big) -2 \mi \frac{d a_{0,n}}{dt}=0, \label{Exp1m}\\
& \lam^0: \,\,\, d_{0,n}\big( |Q_n|^2- |Q^{[1]}_n|^2\big)-2 \mi \frac{d d_{0,n}}{dt}=0, \label{Exp1n}\\
& \lam^0: \,\,\, b_{0,n} \big(|Q_n|^2 + |R^{[1]}_n|^2\big) - a_{1,n} Q_n+a_{-1,n} R_n + d_{1,n} Q^{[1]}_n-d_{-1,n}R^{[1]}_n -2 \mi\frac{d b_{0,n}}{dt} =0,
\label{Exp1o}\\
& \lam^0: \,\,\, c_{0,n} \big(|Q^{[1]}_n|^2 + |R_n|^2\big)  - a_{1,n} \bar{Q}^{[1]}_n + a_{-1,n}\bar{R}^{[1]}_n + d_{1,n}
   \bar{Q}_n - d_{-1,n} \bar{R}_n + 2 \mi \frac{d c_{0,n}}{dt} =0.\label{Exp1p}
\end{align}
\label{Exp1}
\end{subequations}
It follows from Eqs.~\eref{Exp1c},~\eref{Exp1d},~\eref{Exp1e} and~\eref{Exp1f} that
if $(|Q^{[1]}|,|R^{[1]}|) \neq (|Q|,|R|)$, then $a_{0,n}= d_{0,n}=0$, after which Eqs.~\eref{Exp1m} and \eref{Exp1n}
are identically satisfied. Solving Eqs.~\eref{Exp2c},~\eref{Exp2d},~\eref{Exp2g}, and~\eref{Exp2h} yields
\begin{subequations}
\begin{align}
\label{tech1a}
& b_{0,n} = a_{1,n} R_n - d_{1,n} R_n^{[1]} = d_{-1,n} Q_n^{[1]} - a_{-1,n} Q_n, \\
\label{tech1b}
& c_{0,n} = a_{1,n} \bar{R}_n^{[1]} - d_{1,n} \bar{R}_n = d_{-1,n} \bar{Q}_n - a_{-1,n} \bar{Q}_n^{[1]}.
\end{align}
\label{tech1}
\end{subequations}
Plugging (\ref{tech1}) into Eqs.~\eref{Exp1a} and \eref{Exp1l} gives
\begin{align}
\frac{d a_{1,n}}{dt} = \frac{d d_{-1,n}}{dt} = 0.  \label{adconst1}
\end{align}
All constraints of the system (\ref{Exp1}) are satisfied except for Eqs. (\ref{Exp1b}), (\ref{Exp1k}), (\ref{Exp1o}), and (\ref{Exp1p}).
It is however difficult to compute relations between the new and old potentials from these four equations.
Therefore, we will obtain the relations between $(R_n,Q_n)$ and $(R_n^{[1]},Q_n^{[1]})$ by using dressing methods
from Appendix A in \cite{ChenPel}.

Expanding Eq.~\eref{DTinvariancea} in powers of $\lam$ and
equating the coefficients of $\lam^2$ and $\lam^{-2}$ to $0$, we verify that
\begin{align}
a_{1,n+1} = a_{1,n}, \quad d_{-1,n+1}= d_{-1,n}.  \label{adconst2}
\end{align}
Combining Eqs.~\eref{adconst1} and~\eref{adconst2}, we conclude that
$a_{1,n}(t)$ and $d_{-1,n}(t)$ are constants both in $t$ and $n$.
For normalization purposes, we set $a_{1,n}(t) = 1$ and $d_{-1,n}(t) = |\lam_1|^2$.
We also re-enumerate the remaining coefficients
as follows: $a_{-1,n}(t) = a_n(t) |\lam_1|^2$, $b_{0,n}(t) = b_n(t)$,
$c_{0,n}(t) = c_n(t)$, and $d_{1,n}(t) = d_n(t)$.
The Darboux matrix $T_n^{[1]}$ given previously by (\ref{DarTa})
is now rewritten in the simplified form:
\begin{align}
T_{n}(\lam) = \begin{pmatrix}
 \lam + a_n \frac{|\lam_1|^2}{\lam}  & b_n  \\
  c_n & d_n \lam  + \frac{|\lam_1|^2}{\lam} \\
\end{pmatrix}. \label{DarTb}
\end{align}
In order to determine $a_n(t)$, $b_n(t)$, $c_n(t)$, and $d_n(t)$,
we use the symmetry properties of the Lax pair~\eref{LP1}. This allows us to find simultaneously both the coefficients of $T(\lambda)$
and the transformations between the potentials $\big(U,Q, R\big)$ and $\big(U^{[1]}, Q^{[1]}, R^{[1]} \big)$.

\begin{lemma}
Let $\Phi(\lambda_1) = \big(f, g)^T$ be a nonzero solution of the Lax pair~\eref{LP1} at $\lam=\lam_1$.
Then,
\begin{equation}
\label{other-solutions}
[\Phi(\bar{\lambda}_1)]_n = \Omega_n \left( \begin{array}{c} -\bar{g}_n \\ \bar{f}_n \end{array} \right), \quad
[\Phi(-\lambda_1)]_n = (-1)^n \left( \begin{array}{c} -f_n \\ g_n \end{array} \right), \quad
[\Phi(-\bar{\lambda}_1)]_n = (-1)^n \Omega_n \left( \begin{array}{c} \bar{g}_n \\ \bar{f}_n \end{array} \right), \quad
\end{equation}
are solutions of the Lax pair~\eref{LP1} at $\lam = \bar{\lam}_1$, $\lam = -\lam_1$, and $\lam = -\bar{\lam}_1$ respectively,
where $\Omega_n(t)$ satisfies:
\begin{subequations}
\begin{align}
\label{tech-2a}
& \Omega_{n+1} = - S_n \Omega_{n}, \quad S_n := \frac{1+\frac{\mi}{2} h |U_n|^2}{1-\frac{\mi}{2} h |U_n|^2},\\
\label{tech-2b}
&  \frac{d \Omega_{n}}{dt} = M_n \Omega_{n}, \quad M_n := \frac{\mi }{2} \left(\bar{\lambda}^2_1- \bar{\lambda}^{-2}_1 +  |Q_n|^2- |R_n|^2 \right).
\end{align}
\label{tech-2}
\end{subequations}
\label{lemma-1}
\end{lemma}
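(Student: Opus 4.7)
The plan is to verify each of the three formulas in~\eref{other-solutions} by direct substitution into both equations of the Lax pair~\eref{LP1}, using two elementary symmetries of the matrices $N_n(\lam)$ and $P_n(\lam)$: invariance under $\lam\to -\lam$ and covariance under complex conjugation combined with $\lam\to\bar\lam$. The third identity, at $-\bar\lam_1$, will then follow by composing the first two.

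For the symmetry $\lam_1\to -\lam_1$, observe that the diagonal entries of $N_n(\lam)$ depend on $\lam$ only through the odd functions $\lam$ and $\lam^{-1}$, while the off-diagonal entries are independent of $\lam$; hence $N_n(-\lam)$ differs from $N_n(\lam)$ only by negation of its diagonal. Substituting $(-1)^n(-f_n,g_n)^T$ into the discrete equation at $\lam=-\lam_1$, the alternating sign combines with the flipped diagonal to reproduce the original recursion satisfied by $(f_n,g_n)^T$. The time part works analogously: the diagonal of $P_n(\lam)$ is even in $\lam$ and its off-diagonal is odd in $\lam$, so $P_n(-\lam)$ equals $P_n(\lam)$ conjugated by $\diag(1,-1)$, and the same ansatz satisfies $\tfrac{d}{dt}\Phi_n=P_n(-\lam_1)\Phi_n$.

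For the symmetry $\lam_1\to\bar\lam_1$, I would take complex conjugates of both Lax equations at $\lam=\lam_1$ and exploit the unitarity identity $\overline{S_n}=S_n^{-1}$, valid because $S_n=(1+\tfrac{\mi}{2}h|U_n|^2)/(1-\tfrac{\mi}{2}h|U_n|^2)$ has modulus one. A short calculation shows that multiplying $\overline{N_n(\lam_1)}$ by $-S_n$ transforms it into the matrix acting on $(-\bar g_n,\bar f_n)^T$ exactly as $N_n(\bar\lam_1)$ would; the scalar factor $-S_n$ is precisely what is absorbed by the recurrence $\Omega_{n+1}=-S_n\Omega_n$. Similarly, conjugating the time equation and substituting the ansatz $\Omega_n(-\bar g_n,\bar f_n)^T$ yields two scalar equations whose off-diagonal $R_n$- and $Q_n$-contributions cancel consistently, and whose diagonal parts collapse to the ODE $\dot\Omega_n/\Omega_n=\tfrac{\mi}{2}(\bar\lam_1^2-\bar\lam_1^{-2}+|Q_n|^2-|R_n|^2)=M_n$. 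Applying the sign-flip construction of the previous paragraph to the solution $\Omega_n(-\bar g_n,\bar f_n)^T$ just built at $\bar\lam_1$ produces $(-1)^n\Omega_n(\bar g_n,\bar f_n)^T$ at $-\bar\lam_1$, which is the third formula.

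The only real subtlety lies in the conjugation step: because $N_n(\lam)$ is not a Laurent polynomial in $\lam$ but carries the unitary factor $S_n$, one cannot identify $\overline{N_n(\lam_1)}$ directly with $N_n(\bar\lam_1)$ entry by entry. It is precisely the identity $\overline{S_n}=S_n^{-1}$ that absorbs the mismatch into the scalar multiplier $-S_n$, and thereby forces the nontrivial recurrence defining $\Omega_n$. Once this algebraic observation is in hand, both the discrete relation~\eref{tech-2a} and the ODE~\eref{tech-2b} emerge from matching the two components of the conjugated Lax equations.
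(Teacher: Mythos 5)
Your proposal is correct and follows essentially the same route as the paper: the $\bar\lambda_1$ case is verified by conjugating the two Lax equations and matching against the target system, with the identity $\overline{S_n}=S_n^{-1}$ forcing the recurrence $\Omega_{n+1}=-S_n\Omega_n$ and the diagonal mismatch in the time equation forcing $\dot\Omega_n=M_n\Omega_n$, while the remaining two formulas follow from the reflection symmetry $\lam\to-\lam$ of the Lax pair. The paper states that reflection symmetry without detail, whereas you spell out the parity structure of $N_n$ and $P_n$; this is only a difference in presentation, not in substance.
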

\begin{proof}
It follows from (\ref{LP1a}) that components of $\Phi(\lambda_1)$ satisfy the system of difference equations:
\begin{eqnarray}
\label{old-system-1a}
\left\{ \begin{array}{l}
f_{n+1} = \left( \lambda_1 +\frac{2 \mi}{h \lambda_1} S_n \right) f_n
+ \frac{2 U_n}{1-\frac{\mi}{2} h |U_n|^2} g_n, \\
g_{n+1} = \frac{2 \bar{U}_n}{1-\frac{\mi}{2} h |U_n|^2} f_n + \left( \frac{2 \mi}{h \lambda_1}
 - \lambda_1 S_n \right) g_n,
\end{array} \right.
\end{eqnarray}
whereas components of $\Phi(\bar{\lambda}_1)$ satisfy the system of difference equations:
\begin{eqnarray}
\label{old-system-1b}
\left\{ \begin{array}{l}
\Omega_{n+1} \bar{g}_{n+1} = \left(\bar{\lambda}_1 +\frac{2 \mi}{h \bar{\lambda}_1} S_n \right) \Omega_n \bar{g}_n
- \frac{2 U_n}{1-\frac{\mi}{2} h |U_n|^2} \Omega_n \bar{f}_n, \\
\Omega_{n+1} \bar{f}_{n+1} = -\frac{2 \bar{U}_n}{1-\frac{\mi}{2} h |U_n|^2} \Omega_n \bar{g}_n + \left( \frac{2 \mi}{h \bar{\lambda}_1} - \bar{\lambda}_1 S_n \right) \Omega_n \bar{f}_n.
\end{array} \right.
\end{eqnarray}
Dividing (\ref{old-system-1b}) by $\Omega_{n+1}$
and taking the complex conjugation yields (\ref{old-system-1a}) if and only if $\Omega$ satisfies the difference equation
(\ref{tech-2a}).
Similarly, it follows from (\ref{LP1b}) that components of $\Phi(\lambda_1)$ satisfy the time evolution equations:
\begin{eqnarray}
\label{old-system-2a}
\left\{ \begin{array}{l}
\frac{df_n}{dt} = \frac{\mi}{2} \left[
(\lambda_1^2 - |R_n|^2) f_n + (\lambda_1 R_n - \lambda_1^{-1} Q_n) g_n \right], \\
\frac{dg_n}{dt} = \frac{\mi}{2} \left[ (\lambda_1 \bar{R}_n - \lambda_1^{-1} \bar{Q}_n) f_n + (-\lambda_1^{-2} + |Q_n|^2) g_n \right],
\end{array} \right.
\end{eqnarray}
whereas components of $\Phi(\bar{\lambda}_1)$  satisfy the time evolution equations:
\begin{eqnarray}
\label{old-system-2b}
\left\{ \begin{array}{l}
\frac{d \Omega_n}{dt} \bar{g}_n + \Omega_n \frac{d \bar{g}_n}{dt} = \frac{\mi}{2} \left[
(\bar{\lambda}_1^2 - |R_n|^2) \Omega_n \bar{g}_n - (\bar{\lambda}_1 R_n - \bar{\lambda}_1^{-1} Q_n) \Omega_n \bar{f}_n \right], \\
\frac{d \Omega_n}{dt} \bar{f}_n + \Omega_n \frac{d \bar{f}_n}{dt} = \frac{\mi}{2}
\left[ -(\bar{\lambda}_1 \bar{R}_n - \bar{\lambda}_1^{-1} \bar{Q}_n) \Omega_n \bar{g}_n + (-\bar{\lambda}_1^{-2} + |Q_n|^2) \Omega_n \bar{f}_n \right],
\end{array} \right.
\end{eqnarray}
Taking the complex conjugation of (\ref{old-system-2b}) yields (\ref{old-system-2a})
if and only if $\Omega$ satisfies the time evolution equation
(\ref{tech-2b}). The other two solutions in (\ref{other-solutions}) are obtained by the symmetry
of the system (\ref{LP1}) with respect to the reflection $\lam \to -\lam$.
\end{proof}

\begin{lemma}
Let $\Phi(\lambda_1) = (f,g)^T$ be in the kernel of the Darboux matrix $T(\lambda_1)$
and $\Phi(\bar{\lambda}_1) = \Omega \big(-\bar{g}, \bar{f}\big)^T$ be in the kernel of $T(\bar{\lambda}_1)$.
Then, the coefficients of $T(\lambda)$ in (\ref{DarTb}) are given by
\begin{equation}
a_n = - \frac{\bar{\Delta}_n}{\Delta_n}, \quad
b_n = - \frac{\left(\lam_1^2 -\bar{\lam}^2_1\right) f_n\bar{g}_{n} }{\Delta_n},  \quad
c_n =  \frac{\left(\lam_1^2-\bar{\lam}^2_1\right) \bar{f}_{n} g_n}{\Delta_n}, \quad
d_n = - \frac{\bar{\Delta}_n}{\Delta_n},
\label{UndetCoeff}
\end{equation}
where $\Delta_n := \bar{\lambda}_1|f_n|^2 + \lambda_1|g_n|^2$.
Furthermore, $\Phi(-\lam_1)$ and $\Phi(-\bar{\lam}_1)$ in (\ref{other-solutions}) are in the kernel of
$T(-\lambda_1)$ and $T(-\bar{\lambda}_1)$ respectively.
\label{lemma-2}
\end{lemma}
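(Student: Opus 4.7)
The plan is to exploit the block structure of the Darboux matrix (\ref{DarTb}) to decouple the four unknown coefficients into two independent $2\times 2$ linear systems, one for $(a_n,b_n)$ and one for $(c_n,d_n)$, each solvable by elementary elimination. First, I would write out the four scalar equations coming from the two kernel conditions $T_n(\lambda_1)\Phi_n(\lambda_1)=0$ and $T_n(\bar{\lambda}_1)\Phi_n(\bar{\lambda}_1)=0$, using $\Phi_n(\lambda_1)=(f_n,g_n)^T$ and $\Phi_n(\bar{\lambda}_1)=\Omega_n(-\bar{g}_n,\bar{f}_n)^T$ (dividing the latter by the nonzero scalar $\Omega_n$ from Lemma \ref{lemma-1}). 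Since the first row of $T_n(\lambda)$ in (\ref{DarTb}) depends only on $a_n,b_n$ and the second row only on $c_n,d_n$, the resulting four equations split into two disjoint $2\times 2$ systems.

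Next, I would solve each $2\times 2$ system. For $(a_n,b_n)$, eliminating $b_n$ between the first-row equations at $\lambda_1$ and $\bar{\lambda}_1$ produces a single linear equation for $a_n$; after clearing denominators and using $\lambda_1\bar{\lambda}_1=|\lambda_1|^2$ to cancel a common factor, the coefficient of $a_n$ reduces to $\Delta_n=\bar{\lambda}_1|f_n|^2+\lambda_1|g_n|^2$, yielding $a_n=-\bar{\Delta}_n/\Delta_n$. Back-substitution into the first-row equation at $\lambda_1$ then gives $b_n=-(\lambda_1^2-\bar{\lambda}_1^2)f_n\bar{g}_n/\Delta_n$. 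The analogous computation on the second row yields $d_n=-\bar{\Delta}_n/\Delta_n$ and $c_n=(\lambda_1^2-\bar{\lambda}_1^2)\bar{f}_ng_n/\Delta_n$, reproducing (\ref{UndetCoeff}). The implicit non-degeneracy hypothesis $\Delta_n\neq 0$ guarantees the invertibility of each $2\times 2$ system and the well-posedness of the formulas.

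For the second claim, I would observe from the explicit form (\ref{DarTb}) that $T_n(-\lambda)=-\sigma_3 T_n(\lambda)\sigma_3$ with $\sigma_3=\mathrm{diag}(1,-1)$, since the diagonal entries of $T_n$ are odd in $\lambda$ while the off-diagonal entries are $\lambda$-independent. This intertwining immediately gives that if $\Psi\in\ker T_n(\lambda_1)$ then $\sigma_3\Psi\in\ker T_n(-\lambda_1)$, and likewise for $\bar{\lambda}_1$. Comparing $\sigma_3\Phi(\lambda_1)_n=(f_n,-g_n)^T$ and $\sigma_3\Phi(\bar{\lambda}_1)_n=\Omega_n(-\bar{g}_n,-\bar{f}_n)^T$ with the expressions from Lemma \ref{lemma-1} shows that $\Phi(-\lambda_1)_n$ and $\Phi(-\bar{\lambda}_1)_n$ are scalar multiples of $\sigma_3\Phi(\lambda_1)_n$ and $\sigma_3\Phi(\bar{\lambda}_1)_n$ respectively, so they lie in the corresponding kernels. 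The main obstacle throughout is simply the bookkeeping in the algebraic simplification — juggling the positive and negative powers of $\lambda_1$ and repeatedly applying $\lambda_1\bar{\lambda}_1=|\lambda_1|^2$ to recognize the combinations $\Delta_n$ and $\bar{\Delta}_n$ — rather than anything of conceptual depth.
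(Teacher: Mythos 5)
Your solution of the linear system for $(a_n,b_n,c_n,d_n)$ is correct and coincides with the paper's: the paper writes the same four scalar equations (after cancelling $\Omega_n$) and solves them by Cramer's rule, which is exactly your two decoupled $2\times 2$ eliminations, with the common determinant $\bar{\lambda}_1|f_n|^2+\lambda_1|g_n|^2=\Delta_n$ and the implicit hypothesis $\Delta_n\neq 0$. Where you genuinely diverge is the second claim. The paper substitutes the computed coefficients back into \eref{DarTb} and rewrites $T_n(\lambda)$ in the factorized form \eref{DarTc} as $\frac{(\lambda^2-\lambda_1^2)(\lambda^2-\bar{\lambda}_1^2)}{2\lambda\Delta_n}\hat{T}_n(\lambda)$ with $\hat{T}_n$ a sum of four rank-one terms with simple poles at $\pm\lambda_1,\pm\bar{\lambda}_1$; the kernel statements at all four points are then read off directly. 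You instead invoke the parity symmetry $T_n(-\lambda)=-\sigma_3 T_n(\lambda)\sigma_3$ with $\sigma_3=\mathrm{diag}(1,-1)$, which holds for the ansatz \eref{DarTb} independently of the values of the coefficients, and note that $\Phi(-\lambda_1)$ and $\Phi(-\bar{\lambda}_1)$ from \eref{other-solutions} are scalar multiples of $\sigma_3\Phi(\lambda_1)$ and $\sigma_3\Phi(\bar{\lambda}_1)$; this is correct and arguably cleaner and more conceptual for this particular claim. What it does not buy you is the representation \eref{DarTc} itself, which the paper reuses heavily afterwards (for ${\rm ad}\,T_n(\lambda)$ and the products in \eref{representationa}--\eref{representationb}), so in the context of the full argument the explicit factorization is doing double duty; as a proof of Lemma \ref{lemma-2} alone, your route is a valid and slightly more economical alternative.
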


\begin{proof}
We rewrite the linear equations for $T(\lambda_1) \Phi(\lambda_1) = 0$ and
$T(\bar{\lambda}_1) \Phi(\bar{\lambda}_1) = 0$ in the following explicit form:
\begin{equation}
\left\{ \begin{array}{l}
(\lam_1 + a_n \bar{\lam}_1) f_n + b_n g_n = 0,  \\
c_n f_n + (d_n \lam_1  +  \bar{\lam}_1) g_n = 0, \\
-(\bar{\lam}_1 + a_n \lam_1) \bar{g}_n + b_n \bar{f}_n = 0,  \\
-c_n \bar{g}_n + (d_n \bar{\lam}_1  +  \lam_1) \bar{f}_n = 0,
\end{array} \right.
\label{lin-system-a-b-c-d}
\end{equation}
where the scalar factor $\Omega$ has been canceled out.
Solving the linear system (\ref{lin-system-a-b-c-d}) with Cramer's rule yields (\ref{UndetCoeff}).
Then, it follows from (\ref{DarTb}) and (\ref{UndetCoeff}) that $T_n(\lam)$ can be written in the form:
\begin{eqnarray}
T_{n}(\lam) = \frac{(\lambda^2 - \lambda_1^2) (\lambda^2 - \bar{\lambda}_1^2)}{2\lambda
\Delta_n} \hat{T}_n(\lam),
\label{DarTc}
\end{eqnarray}
where
\begin{eqnarray*}
\hat{T}_n(\lam) & = & \frac{1}{\lambda - \lambda_1}
\left( \begin{array}{c}
\bar{g}_n  \\  \bar{f}_n \end{array} \right)
\left( \begin{array}{cc}
g_n & -f_n \end{array} \right)
+ \frac{1}{\lambda + \lambda_1}
\left( \begin{array}{c}
- \bar{g}_n  \\  \bar{f}_n \end{array} \right)
\left( \begin{array}{cc}
g_n & f_n \end{array} \right) \\
& \phantom{t} &
+ \frac{1}{\lambda - \bar{\lambda}_1}
\left( \begin{array}{c}
 f_n  \\ -  g_n \end{array} \right)
\left( \begin{array}{cc}
\bar{f}_n & \bar{g}_n \end{array} \right)
+ \frac{1}{\lambda + \bar{\lambda}_1}
\left( \begin{array}{c}
 f_n  \\  g_n \end{array} \right)
\left( \begin{array}{cc}
- \bar{f}_n & \bar{g}_n \end{array} \right).
\end{eqnarray*}
It follows from (\ref{DarTc}) that
\begin{eqnarray*}
&& T_n(\lambda_1) \left( \begin{array}{c} f_n \\ g_n \end{array} \right) = \left( \begin{array}{c} 0 \\ 0 \end{array} \right), \quad
T_n(-\lambda_1) \left( \begin{array}{c} -f_n \\ g_n \end{array} \right) = \left( \begin{array}{c} 0 \\ 0 \end{array} \right), \\
&& T_n(\bar{\lambda}_1) \left( \begin{array}{c} -\bar{g}_n \\ \bar{f}_n \end{array} \right) = \left( \begin{array}{c} 0 \\ 0 \end{array} \right), \quad
T_n(-\bar{\lambda}_1) \left( \begin{array}{c} \bar{g}_n \\ \bar{f}_n \end{array} \right) = \left( \begin{array}{c} 0 \\ 0 \end{array} \right),
\end{eqnarray*}
hence $T(\pm \lam_1) \Phi(\pm \lam_1) = 0$ and $T(\pm \bar{\lam}_1) \Phi(\pm \bar{\lam}_1) = 0$.
\end{proof}

\begin{lemma} \label{L:DTPolynomial}
Let the Darboux matrix $T(\lambda)$ be in the form \eref{DarTb} with
the coefficients given by Eqs.~\eref{UndetCoeff}.
Then, the determinant of $T(\lambda)$ is given by
\begin{align}
\det\,T_{n}(\lambda) = -  \frac{(\lam^2-\lam^2_1)(\lam^2-\bar{\lam}^2_1)}{\lam^2}
\frac{\bar{\Delta}_n}{\Delta_n}.
\label{TPolynomial}
\end{align}
\end{lemma}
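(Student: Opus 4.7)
My plan is to show that $\lambda^2 \det T_n(\lambda)$ is a quadratic polynomial in $\lambda^2$ whose four roots are $\pm\lambda_1$ and $\pm\bar\lambda_1$, and then pin down the leading coefficient via the formulas for $a_n,b_n,c_n,d_n$ from Lemma \ref{lemma-2}.

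First I would compute the determinant directly from \eref{DarTb}:
\begin{equation*}
\det T_n(\lambda) = \left(\lambda + a_n \frac{|\lambda_1|^2}{\lambda}\right)\left(d_n \lambda + \frac{|\lambda_1|^2}{\lambda}\right) - b_n c_n
= d_n \lambda^2 + \bigl(|\lambda_1|^2 + a_n d_n |\lambda_1|^2 - b_n c_n\bigr) + a_n \frac{|\lambda_1|^4}{\lambda^2}.
\end{equation*}
Multiplying by $\lambda^2$ shows that $\lambda^2\det T_n(\lambda)$ is a polynomial in $\lambda$ of degree at most $4$, containing only the even powers $\lambda^4, \lambda^2, \lambda^0$, i.e.\ a quadratic in $\lambda^2$.

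Next I would invoke Lemma \ref{lemma-2}, which tells us that $T_n(\pm\lambda_1)$ and $T_n(\pm\bar\lambda_1)$ each annihilate a nonzero vector, so each of these four matrices is singular. Hence $\det T_n(\lambda)$ vanishes at $\lambda = \pm\lambda_1,\pm\bar\lambda_1$, giving four roots of the quadratic-in-$\lambda^2$ polynomial $\lambda^2\det T_n(\lambda)$ at $\lambda^2 = \lambda_1^2$ and $\lambda^2 = \bar\lambda_1^2$. Since a quadratic in $\lambda^2$ with two prescribed roots is determined up to an overall factor, we get
\begin{equation*}
\lambda^2 \det T_n(\lambda) = C_n(t) \, (\lambda^2 - \lambda_1^2)(\lambda^2 - \bar\lambda_1^2),
\end{equation*}
for some factor $C_n(t)$ independent of $\lambda$ (to rule out the degenerate case $\lambda_1^2=\bar\lambda_1^2$ I would simply note that $\lambda_1\in\mathbb{C}\setminus\mathbb{R}$ is the standard assumption; if $\lambda_1$ were real the formula \eref{UndetCoeff} would force $b_n=c_n=0$ and a direct check still gives the stated identity).

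Finally, I would match the coefficient of $\lambda^4$: from the direct expansion this coefficient equals $d_n$, while from the factored form it equals $C_n(t)$. Therefore $C_n(t) = d_n = -\bar\Delta_n/\Delta_n$ by Lemma \ref{lemma-2}, yielding
\begin{equation*}
\det T_n(\lambda) = -\frac{\bar\Delta_n}{\Delta_n} \cdot \frac{(\lambda^2-\lambda_1^2)(\lambda^2-\bar\lambda_1^2)}{\lambda^2},
\end{equation*}
which is \eref{TPolynomial}. I do not anticipate any serious obstacle here: the proof amounts to combining the explicit Laurent structure of the $2\times 2$ determinant with the four singular points already identified in Lemma \ref{lemma-2}, plus a single coefficient match. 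As a cross-check, one can compare the constant term $a_n|\lambda_1|^4$ against $C_n(t)|\lambda_1|^4$, which is consistent with $a_n=d_n$ as given by \eref{UndetCoeff}.
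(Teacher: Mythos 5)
Your proof is correct and takes essentially the same route as the paper's: expand the Laurent polynomial $\det T_n(\lambda)$ from \eref{DarTb}, identify $\pm\lam_1$ and $\pm\bar{\lam}_1$ as roots via the kernel vectors supplied by Lemma \ref{lemma-2}, and fix the overall factor from the coefficient $d_n = -\bar{\Delta}_n/\Delta_n$. The paper's own argument is just a terser version of this (it also notes direct substitution of \eref{UndetCoeff} as an alternative), so your write-up simply supplies the details the paper leaves implicit.
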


\begin{proof}
Expanding $\det\,T_{n}(\lambda)$ given by \eref{DarTb} yields
\begin{align}
\det\,T_{n}(\lambda) = d_n \lambda^2 + a_n d_n |\lam_1|^2 - b_n c_n +  |\lambda_1|^2 +  a_n |\lambda_1|^4 \lambda^{-2}.
\label{TPoly-prel}
\end{align}
Since $\pm \lam_1$ and $\pm \bar\lam_1$ are the roots of $\det\,T(\lambda)$,
we obtain (\ref{TPolynomial}). Alternatively, substituting (\ref{UndetCoeff})
into (\ref{TPoly-prel}) yields (\ref{TPolynomial}).
\end{proof}

For $\lam \neq \pm \lam_1$ and $\lam \neq \pm \bar{\lam}_1$,
we define
\begin{equation}
\label{adjT}
{\rm ad} T_n(\lambda) = \det T_n(\lambda) [T_n(\lambda)]^{-1} = \begin{pmatrix}
d_n \lam  +  \frac{|\lam_1|^2}{\lam}    & -b_n  \\
  -c_n & \lam + a_n \frac{|\lam_1|^2}{\lam} \\
\end{pmatrix}.
\end{equation}
and obtain ${\rm ad} T_n(\lambda)$ from (\ref{DarTb}) and (\ref{UndetCoeff}) in the form:
\begin{eqnarray}
{\rm ad} T_{n}(\lam) = \frac{(\lambda^2 - \lambda_1^2) (\lambda^2 - \bar{\lambda}_1^2)}{2\lambda \Delta_n} {\rm ad} \hat{T}_n(\lam),
\label{DarTd}
\end{eqnarray}
where
\begin{eqnarray*}
{\rm ad} \hat{T}_n(\lam) & = & \frac{1}{\lambda - \lambda_1}
\left( \begin{array}{c}
f_n \\ g_n \end{array} \right)
\left( \begin{array}{cc}
- \bar{f}_n  & \bar{g}_n \end{array} \right)
+ \frac{1}{\lambda + \lambda_1}
\left( \begin{array}{c}
f_n \\ -g_n \end{array} \right)
\left( \begin{array}{cc}
 \bar{f}_n  & \bar{g}_n \end{array} \right) \\
& \phantom{t} &
+ \frac{1}{\lambda - \bar{\lambda}_1}
\left( \begin{array}{c}
\bar{g}_n \\ -\bar{f}_n \end{array} \right)
\left( \begin{array}{cc}
- g_n  & - f_n \end{array} \right)
+ \frac{1}{\lambda + \bar{\lambda}_1}
\left( \begin{array}{c}
\bar{g}_n \\ \bar{f}_n \end{array} \right)
\left( \begin{array}{cc}
 g_n  & - f_n \end{array} \right).
\end{eqnarray*}

New potentials $N_n^{[1]}(\lam)$ and $P_n^{[1]}(\lam)$ are derived from Eqs.~(\ref{DTinvariance}) 
by using the Darboux matrix $T(\lam)$. Assuming $\lam \neq \pm \lam_1$ and $\lam \neq \pm \bar{\lam}_1$, 
we obtain from (\ref{DTinvariance}) and (\ref{DarTd}) that 
\begin{eqnarray}
\nonumber
N_n^{[1]}(\lam) & = & \frac{1}{\det T_n(\lam)} T_{n+1}(\lam) N_n(\lam) {\rm ad} T_n(\lam) \\
& = & -\frac{\lam}{2   \bar{\Delta}_n} T_{n+1}(\lam) N_n(\lam) {\rm ad} \hat{T}_n(\lam)
\label{representationa}
\end{eqnarray}
and
\begin{eqnarray}
\nonumber
P^{[1]}_n(\lambda) & = & \frac{1}{\det T_n(\lam)} \left[ \frac{d}{d t} T_n(\lambda) + T_{n}(\lambda) P_n(\lambda) \right] {\rm ad} T_{n}(\lambda) \\
& = & -\frac{\lam}{2  \bar{\Delta}_n}  \left[ \frac{d}{d t} T_n(\lambda) + T_{n}(\lambda) P_n(\lambda) \right] {\rm ad} \hat{T}_{n}(\lambda),
\label{representationb}
\end{eqnarray}
where the expressions (\ref{TPolynomial}) and (\ref{DarTd}) have been used.

First, we compute the products in the right-hand side of Eq.~\eref{representationa}. By Lemma \ref{lemma-1}
and direct computations, we obtain
\begin{subequations}
\label{Nnf}
\begin{align}
 & N_n(\lam) \left( \begin{array}{c}
f_n \\ g_n \end{array} \right)  =   \left( \begin{array}{c}
f_{n+1} \\ g_{n+1} \end{array} \right) + (\lam - \lam_1)
\left( \begin{array}{cc} 1 - \frac{2\,\mi}{h \lam \lam_1} S_n & 0 \\
0 &  -\frac{2\,\mi}{h \lam \lam_1} - S_n \end{array} \right)
\left( \begin{array}{c}  f_{n} \\ g_{n} \end{array} \right), \\
&  N_n(\lam) \left( \begin{array}{c}
f_n \\ -g_n \end{array} \right)   =  \left( \begin{array}{c}
-f_{n+1} \\ g_{n+1} \end{array} \right) + (\lam + \lam_1)
\left( \begin{array}{cc} 1 + \frac{2\,\mi}{h \lam \lam_1} S_n & 0 \\
0 &  \frac{2\,\mi}{h \lam \lam_1} - S_n \end{array} \right)
\left( \begin{array}{c}  f_{n} \\ -g_{n} \end{array} \right), \\
& N_n(\lam) \left( \begin{array}{c}
\bar{g}_n \\ -\bar{f}_n \end{array} \right)   =
-S_n \left( \begin{array}{c}
\bar{g}_{n+1} \\ -\bar{f}_{n+1} \end{array} \right) + (\lam - \bar{\lam}_1)
\left( \begin{array}{cc} 1 - \frac{2\,\mi}{h \lam \bar{\lam}_1} S_n & 0 \\
0 &  -\frac{2\,\mi}{h \lam \bar{\lam}_1} - S_n \end{array} \right)
\left( \begin{array}{c} \bar{g}_n \\ -\bar{f}_n \end{array} \right),  \\
& N_n(\lam) \left( \begin{array}{c}
\bar{g}_n \\ \bar{f}_n \end{array} \right)   =
S_n \left( \begin{array}{c}
\bar{g}_{n+1} \\ \bar{f}_{n+1} \end{array} \right) + (\lam + \bar{\lam}_1)
\left( \begin{array}{cc} 1 + \frac{2\,\mi}{h \lam \bar{\lam}_1} S_n & 0 \\
0 &  \frac{2\,\mi}{h \lam \bar{\lam}_1} - S_n \end{array} \right)
\left( \begin{array}{c}  \bar{g}_n \\ \bar{f}_n \end{array} \right),
\end{align}
\end{subequations}
where $S_n$ is defined in Eq.~(\ref{tech-2a}). By using this table,
we compute the first product in (\ref{representationa}):
\begin{eqnarray*}
N_n(\lam) {\rm ad} \hat{T}_n(\lam) & = & \frac{1}{\lambda - \lambda_1}
\left( \begin{array}{c}
f_{n+1} \\ g_{n+1} \end{array} \right)
\left( \begin{array}{cc}
- \bar{f}_n  & \bar{g}_n \end{array} \right)
+ \frac{1}{\lambda + \lambda_1}
\left( \begin{array}{c}
-f_{n+1} \\ g_{n+1} \end{array} \right)
\left( \begin{array}{cc}
 \bar{f}_n  & \bar{g}_n \end{array} \right) \\
& \phantom{t} &
+ \frac{1}{\lambda - \bar{\lambda}_1}
S_n \left( \begin{array}{c}
\bar{g}_{n+1} \\ -\bar{f}_{n+1} \end{array} \right)
\left( \begin{array}{cc}
 g_n  & f_n \end{array} \right) + \frac{1}{\lambda + \bar{\lambda}_1} S_n
\left( \begin{array}{c}
\bar{g}_{n+1} \\ \bar{f}_{n+1} \end{array} \right)
\left( \begin{array}{cc}
 g_n  & - f_n \end{array} \right) \\
& \phantom{t} &
+ \frac{4\mi}{h \lam |\lam_1|^2}
\left( \begin{array}{cc}
 S_n \Delta_n & 0 \\
0 & -\bar{\Delta}_n \end{array} \right).
\end{eqnarray*}
By Lemma \ref{lemma-2} and direct computations, we obtain
\begin{subequations}
\label{Tnf}
\begin{align}
& T_n(\lam) \left( \begin{array}{c}
f_n \\ g_n \end{array} \right) =  (\lam - \lam_1)
\left( \begin{array}{cc} 1 - a_n \frac{\bar{\lam}_1}{\lam} & 0 \\
0 &  d_n -  \frac{\bar{\lam}_1}{\lam } \end{array} \right)
\left( \begin{array}{c}  f_n \\ g_n \end{array} \right), \\
& T_n(\lam) \left( \begin{array}{c}
f_n \\ -g_n \end{array} \right) = (\lam + \lam_1)
\left( \begin{array}{cc} 1 + a_n \frac{\bar{\lam}_1}{\lam} & 0 \\
0 &  d_n +  \frac{\bar{\lam}_1}{\lam } \end{array} \right)
\left( \begin{array}{c}  f_n \\ -g_n \end{array} \right), \\
& T_n(\lam) \left( \begin{array}{c}
\bar{g}_n \\ -\bar{f}_n \end{array} \right) = (\lam - \bar{\lam}_1)
\left( \begin{array}{cc} 1 - a_n \frac{\lam_1}{\lam} & 0 \\
0 &  d_n -  \frac{\lam_1}{\lam} \end{array} \right)
\left( \begin{array}{c}  \bar{g}_n \\ -\bar{f}_{n} \end{array} \right), \\
& T_n(\lam) \left( \begin{array}{c}
\bar{g}_n \\ \bar{f}_n \end{array} \right) = (\lam + \bar{\lam}_1)
\left( \begin{array}{cc} 1 + a_n \frac{\lam_1}{\lam} & 0 \\
0 &  d_n +  \frac{\lam_1}{\lam} \end{array} \right)
\left( \begin{array}{c}  \bar{g}_n \\ \bar{f}_{n} \end{array} \right).
\end{align}
\end{subequations}
By using this table, we compute the second product in (\ref{representationa}):
\begin{eqnarray*}
T_{n+1}(\lam) N_n(\lam) {\rm ad} \hat{T}_n(\lam) & = &
2 \left( \begin{array}{cc} - \left( f_{n+1} \bar{f}_n - S_n \bar{g}_{n+1} g_n \right) &
- \frac{a_{n+1}}{\lam} \left( \bar{\lam}_1 f_{n+1} \bar{g}_n + S_n \lam_1 \bar{g}_{n+1} f_n \right) \\
\frac{1}{\lam} \left( \bar{\lam}_1 g_{n+1} \bar{f}_n + S_n \lam_1 \bar{f}_{n+1} g_n \right) &
d_{n+1} \left( g_{n+1} \bar{g}_n - S_n \bar{f}_{n+1} f_n \right)
\end{array} \right) \\
& \phantom{t} &
+ \frac{4\mi}{h \lam |\lam_1|^2} \left( \begin{array}{cc} \lam + a_{n+1} \frac{|\lam_1|^2}{\lam}  & b_{n+1}  \\
  c_{n+1} & d_{n+1} \lam  +  \frac{|\lam_1|^2}{\lam}  \end{array} \right)
\left( \begin{array}{cc}  S_n \Delta_n & 0 \\
0 & -\bar{\Delta}_n \end{array} \right).
\end{eqnarray*}
Substituting this expression into (\ref{representationa}), we finally obtain
\begin{eqnarray}
\label{N-new}
N_n^{[1]}(\lam) = \left( \begin{array}{cc} \delta_0 \lam + \frac{2\,\mi}{h \lam} \delta_1 & \delta_2 \\
\delta_3 & \frac{2\,\mi}{h\lam} - \delta_4 \lam  \end{array} \right),
\end{eqnarray}
where
\begin{eqnarray*}
\delta_0 & = &  \frac{\bar{f}_n f_{n+1} - S_n g_n \bar{g}_{n+1}}{\bar{\Delta}_n} - \frac{2\,\mi}{h} \frac{S_n \Delta_n}{|\lambda_1|^2 \bar{\Delta}_n},  \\
\delta_1 & = & -\frac{a_{n+1} S_n \Delta_n}{\bar{\Delta}_n}, \\
\delta_2 & = & a_{n+1} \frac{\bar{\lam}_1 f_{n+1} \bar{g}_n + S_n \lam_1 \bar{g}_{n+1} f_n}{
 \bar{\Delta}_n} + \frac{2\,\mi b_{n+1}}{h  |\lam_1|^2}, \\
\delta_3 & = & - \frac{\bar{\lam}_1 g_{n+1} \bar{f}_n + S_n \lam_1 \bar{f}_{n+1} g_n}{
\bar{\Delta}_n} - \frac{2\,\mi c_{n+1} S_n \Delta_n}{h |\lam_1|^2 \bar{\Delta}_n}, \\
\delta_4 & = & -\frac{2\,\mi d_{n+1}}{h  |\lam_1|^2} +
d_{n+1} \frac{g_{n+1} \bar{g}_n - S_n \bar{f}_{n+1} f_n}{ \bar{\Delta}_n}.
\end{eqnarray*}
It follows from substitution of~(\ref{old-system-1a}) and~(\ref{old-system-1b}) for $f_{n+1}$, $g_{n+1}$, $\bar{f}_{n+1}$ and $\bar{g}_{n+1}$ that
\begin{align*}
\bar{f}_n f_{n+1} - S_n g_n \bar{g}_{n+1} = \bar{\Delta}_n + \frac{2\,\mi S_n \Delta_n}{h|\lam_1|^2}
\end{align*}
and
\begin{align*}
g_{n+1} \bar{g}_n - S_n \bar{f}_{n+1} f_n = - S_n \Delta_n  + \frac{2\,\mi \bar{\Delta}_n}{h|\lam_1|^2}.
\end{align*}
As a result, we verify that $\delta_0 = 1$ and $\delta_1 = \delta_4$.
We represent $N_n^{[1]}(\lam)$ in (\ref{N-new}) in the same form as
$N_n(\lam)$ in (\ref{LP1a}), therefore, we write
\begin{align}
 \label{WYZ1}
 \delta_1 =\frac{1 + \frac{\mi}{2}h  W_n}{1-\frac{\mi}{2}h  W_n},  \quad
 \delta_2  = \frac{2 Y_n}{1-\frac{\mi}{2}h W_n}, \quad
 \delta_3   = \frac{2 Z_n}{1-\frac{\mi}{2}h W_n}
\end{align}
for some $Y_n$, $Z_n$, and $W_n$. Using Eqs.~(\ref{UndetCoeff}) for $a_{n+1}$, $b_{n+1}$, and $c_{n+1}$
and solving Eq.~(\ref{WYZ1}) for $W_n$, $Y_n$, and $Z_n$ yield
\begin{subequations}
\label{WYZ2}
\begin{align}
& W_n = \frac{2\,\mi (\bar{\Delta}_n \Delta_{n+1} - S_n \bar{\Delta}_{n+1}\Delta_n)}{
h(\bar{\Delta}_n \Delta_{n+1} + S_n \bar{\Delta}_{n+1}\Delta_n)},  \label{WYZ2a} \\
& Y_n = - \frac{h|\lambda_1|^2 \bar{\Delta}_{n+1} \left(\lambda_1 S_n f_n \bar{g}_{n+1} + \bar{\lambda}_1 f_{n+1} \bar{g}_n\right)
 + 2\,\mi (\lambda_1^2-\bar{\lambda}_1^2)\bar{\Delta}_n f_{n+1} \bar{g}_{n+1}}{
h  |\lambda_1|^2 ( \bar{\Delta}_n \Delta_{n+1} + S_n \bar{\Delta}_{n+1}\Delta_n)}, \label{WYZ2b} \\
& Z_n = - \frac{h|\lambda_1|^2 \Delta_{n+1}
\left(\lambda_1 S_n \bar{f}_{n+1}g_n + \bar{\lambda}_1\bar{f}_n g_{n+1} \right)
+ 2\,\mi (\lambda_1^2-\bar{\lambda}_1^2) S_n \Delta_n \bar{f}_{n+1}g_{n+1} }{
h |\lambda_1|^2 (\bar{\Delta}_n \Delta_{n+1} + S_n \bar{\Delta}_{n+1}\Delta_n)}.\label{WYZ2c}
\end{align}
\end{subequations}
Substituting Eqs.~(\ref{old-system-1a}) and~(\ref{old-system-1b}) into Eqs.~(\ref{WYZ2b})--(\ref{WYZ2c})
simplifies $Y_n$ and $Z_n$ to the form:
\begin{subequations}
\label{Y-Z-final}
\begin{align}
& Y_n = \frac{h |\lambda_1 |^2 \bar{\Delta}_n U_n - 2 \mi \left(\lambda_1^2-\bar{\lambda}_1^2\right) f_n \bar{g}_n-2 \mi \Delta_n U_n}{h \left(\lambda_1^2-\bar{\lambda}_1^2\right)\bar{f}_n g_n U_n -h |\lambda_1 |^2 \Delta_n + 2 \mi \bar{\Delta}_n}, \label{Y-Z-finalA} \\
& Z_n = \frac{h |\lambda_1 |^2 \Delta_n \bar{U}_n-2 \mi
\left(\lambda_1^2-\bar{\lambda}_1^2\right) g_n \bar{f}_n+2 \mi \bar{\Delta}_n\bar{U}_n}{h \left(\bar{\lambda}_1^2-\lambda_1^2\right) f_n  \bar{g}_n \bar{U}_n - h |\lambda_1 |^2 \bar{\Delta}_n  - 2 \mi \Delta_n}. \label{Y-Z-finalB}
\end{align}
\end{subequations}
It follows from Eqs.~(\ref{Y-Z-final}) that $Y_n = \bar{Z}_n$.
We have checked with the aid of Wolfram's MATHEMATICA from Eq.~(\ref{WYZ2a}) that
$W_n =Y_n Z_n $ is satisfied. As a result, we conclude that
$N_n^{[1]}(\lam)$ in (\ref{N-new}) is the same as that of $N_n(\lam)$ in (\ref{LP1a})
with the correspondence: $U_n^{[1]}=Y_n$, $\overline{U_n^{[1]}} = Z_n = \bar{Y}_n$,
and $|U_n^{[1]}|^2 = W_n = |Y_n|^2$. Thus, Eq.~(\ref{PoTr1}) follows from
the transformation formula (\ref{Y-Z-finalA}).

Next, we prove Eq.~\eref{representationb} and derive the transformations for $R_n$ and $Q_n$
in Eqs. (\ref{PoTr2}) and (\ref{PoTr3}). Again, using Lemma \ref{lemma-1} and direct computations, we
obtain
\begin{subequations}
\label{Pnf}
\begin{align}
& P_n(\lam) \left(\begin{array}{c} f_n \\ g_n \end{array} \right)
=  \left(\begin{array}{c} f_{n,t} \\ g_{n,t} \end{array} \right)
+ (\lam - \lam_1) H_1(\lam) \left(\begin{array}{c} f_n \\ g_n \end{array} \right),  \\
& P_n(\lam) \left( \begin{array}{c} f_n \\ -g_n \end{array} \right)
=  \left( \begin{array}{c} f_{n,t} \\ -g_{n,t} \end{array} \right)
+ (\lam + \lam_1) H_2(\lam)\left( \begin{array}{c} f_n \\ -g_n \end{array} \right),     \\
&  P_n(\lam) \left( \begin{array}{c} \bar{g}_n \\[1mm] -\bar{f}_n \end{array} \right)
= \left( \begin{array}{c} \bar{g}_{n,t} \\[1mm] -\bar{f}_{n,t} \end{array}\right)
+ M_n \left( \begin{array}{c} \bar{g}_n \\[1mm] -\bar{f}_n \end{array} \right)
+ (\lam - \bar{\lam}_1) H_3(\lam) \left( \begin{array}{c} \bar{g}_n \\[1mm] -\bar{f}_n \end{array} \right),    \\
& P_n(\lam) \left( \begin{array}{c} \bar{g}_n \\[1mm] \bar{f}_n \end{array} \right)
=  \left( \begin{array}{c} \bar{g}_{n,t} \\[1mm] \bar{f}_{n,t} \end{array} \right)
+ M_n \left( \begin{array}{c} \bar{g}_n \\[1mm] \bar{f}_n \end{array} \right)
+ (\lam + \bar{\lam}_1) H_4(\lam) \left( \begin{array}{c} \bar{g}_n \\[1mm] \bar{f}_n \end{array} \right),
\end{align}
\end{subequations}
where $M_n$ is defined in Eq.~(\ref{tech-2b}) and matrices $H_{1,2,3,4}(\lam)$ are given by
\begin{align*}
& H_1(\lam) = \frac{\mi}{2}  \left( \begin{array}{cc}
\lambda + \lambda_1 & R_n + \frac{1}{\lambda \lambda_1} Q_n \\[1mm]
\bar{R}_n + \frac{1}{\lambda \lambda_1} \bar{Q}_n & \frac{\lambda + \lambda_1}{\lambda^2 \lambda^2_1} \end{array} \right), \\[1mm]
& H_2(\lam) =  \frac{\mi}{2} \left( \begin{array}{cc}
\lambda -\lambda_1 & R_n - \frac{1}{\lam \lam_1} Q_n \\[1mm]
\bar{R}_n - \frac{1}{\lam \lam_1} \bar{Q}_n & \frac{\lambda - \lambda_1}{\lambda^2 \lambda^2_1}  \end{array} \right), \\[1mm]
& H_3(\lam) =  \frac{\mi}{2} \left( \begin{array}{cc}
\lambda + \bar{\lambda}_1 & R_n + \frac{1}{\lambda \bar{\lambda}_1} Q_n  \\[1mm]
\bar{R}_n + \frac{1}{\lambda \bar{\lambda}_1} \bar{Q}_n  & \frac{\lam + \bar{\lam}_1}{\lambda^2 \bar{\lambda}_1^2} \end{array} \right), \\[1mm]
& H_4(\lam)=  \frac{\mi}{2} \left( \begin{array}{cc}
\lambda - \bar{\lambda}_1 &  R_n - \frac{1}{\lambda \bar{\lambda}_1} Q_n  \\[1mm]
\bar{R}_n -  \frac{1}{\lambda \bar{\lambda}_1} \bar{Q}_n & \frac{\lam - \bar{\lam}_1}{\lambda^2 \bar{\lambda}_1^2} \end{array} \right).
\end{align*}
Based on the results in Eq.~\eref{Pnf}, the product in the right-hand side of Eq.~\eref{representationb} can be obtained as
\begin{eqnarray}
\nonumber &  &
\left[ \frac{d} {d t} T_n(\lambda) + T_{n}(\lambda) P_n(\lambda) \right] {\rm ad} \hat{T}_{n}(\lambda) \\
\nonumber &  &  =
\frac{1}{\lambda -\lambda_1}
\left[T_n(\lam)\left(
\begin{array}{c}
 f_n \\
 g_n \\
\end{array}
\right) \right]_t
\left(
\begin{array}{cc}
 - \bar{f}_n & \bar{g}_n \\
\end{array}
\right) +
\frac{1}{\lambda +\lambda_1} \left[T_n(\lam)\left(
\begin{array}{c}
 f_n \\
 -g_n \\
\end{array}
\right) \right]_t \left(
\begin{array}{cc}
  \bar{f}_n & \bar{g}_n \\
\end{array}
\right) \\[1mm]
\nonumber &  & \quad +\frac{1}{\lambda -\bar{\lambda}_1}
\left[
T_n(\lam)\left(
\begin{array}{c}
 \bar{g}_n \\
 -\bar{f}_n \\
\end{array}
\right) \right]_t \left(
\begin{array}{cc}
 - g_n & - f_n \\
\end{array}
\right)
+\frac{1}{\lambda+ {\bar{\lambda}_1}}
\left[T_n(\lam)\left(
\begin{array}{c}
 \bar{g}_n \\
 \bar{f}_n \\
\end{array}
\right)\right]_t\left(
\begin{array}{cc}
  g_n & -f_n \\
\end{array}
\right)
\\
\nonumber &  & \quad + T_{n}(\lam)H_1(\lam)
\left( \begin{array}{c}
f_{n} \\ g_{n} \end{array} \right)
\left( \begin{array}{cc}
- \bar{f}_n  & \bar{g}_n \end{array} \right)
+ T_{n}(\lam)H_2(\lam)
\left( \begin{array}{c}
f_{n} \\ -g_{n} \end{array} \right)
\left( \begin{array}{cc}
 \bar{f}_n  & \bar{g}_n \end{array} \right) \\
\nonumber & & \quad
+ T_{n}(\lam)H_3(\lam)
\left( \begin{array}{c}
\bar{g}_{n} \\ -\bar{f}_{n} \end{array} \right)
\left( \begin{array}{cc}
- g_n  & - f_n \end{array} \right)
+ T_{n}(\lam)H_4(\lam)
\left( \begin{array}{c}
\bar{g}_{n} \\ \bar{f}_{n} \end{array} \right)
\left( \begin{array}{cc}
 g_n  & - f_n \end{array} \right) \\
\nonumber & & \quad + M_n \left[ \frac{1}{\lambda - \bar{\lambda}_1} T_{n}(\lam) \left( \begin{array}{c}
\bar{g}_{n} \\ -\bar{f}_{n} \end{array} \right)
\left( \begin{array}{cc}
- g_n  & - f_n \end{array} \right)
+ \frac{1}{\lambda + \bar{\lambda}_1} T_{n}(\lam)
\left( \begin{array}{c}
\bar{g}_{n} \\ \bar{f}_{n} \end{array} \right)
\left( \begin{array}{cc}
 g_n  & - f_n \end{array} \right)\right].
\end{eqnarray}
Expanding the above equation and substituting it into ~(\ref{representationb}) gives
\begin{align}
\nonumber
P^{[1]}_n(\lambda)
= & \frac{1}{\bar{\Delta}_n} \left(
\begin{array}{cc}
- \bar{\lambda}_1 \bar{f}_n \left(a_n f_n\right)_t
- \lambda_1 g_n \left(a_n \bar{g}_n\right)_t &
\lambda \left(f_n\bar{g}_{n,t} - f_{n,t}\bar{g}_n\right) \\
\lambda\, d_n \left(\bar{f}_n g_{n,t} - \bar{f}_{n,t} g_n\right)
 & \lambda_1 f_n \bar{f}_{n,t} + \bar{\lambda}_1 \bar{g}_n g_{n,t} \\
\end{array}
\right) \\
& \nonumber
+ \frac{\mi}{2} \left(
\begin{array}{cc}
\lambda^2 + |\lambda_1|^2 a_n
+ \frac{b_n}{|\lambda_1|^2}\left(\frac{\Delta_n}{\bar{\Delta}_n}\bar{Q}_n
-\frac{\lambda_1^2-\bar{\lambda}_1^2}{|\lambda_1|^2 \bar{\Delta}_n}\bar{f}_n g_n \right) &
- \left(\frac{a_n}{\lambda} +\frac{\lambda}{|\lambda_1|^2}\right)Q_n
-\frac{b_n}{\lambda |\lambda_1|^2} \\[2mm]
\lambda c_n +\left(\frac{1}{\lambda} + \frac{\lambda  d_n}{|\lambda_1|^2} \right)
 \left(\frac{\Delta_n}{\bar{\Delta}_n}\bar{Q}_n-\frac{\lambda_1^2-\bar{\lambda}_1^2 }
 {|\lambda_1|^2 \bar{\Delta}_n } g_n \bar{f}_n  \right) &
- \frac{1}{\lambda^2} - \frac{(d_n+c_n Q_n)}{ |\lambda_1|^2}
\end{array}
\right)  \\
& +
M_n
\left(
\begin{array}{cc}
\frac{\lambda_1}{\Delta_n}|g_n|^2 &  \frac{\lam}{\bar{\Delta}_n} f_n \bar{g}_n \\[1mm]
\frac{\lambda}{\Delta_n}\bar{f}_n g_n &
\frac{\lambda_1}{\bar{\Delta}_n}|f_n|^2 \\
\end{array}
\right),   \label{PN1a}
\end{align}
where we have used Eq.~(\ref{UndetCoeff}) in obtaining the last term. Thus, $P^{[1]}_n$ can be formally written in the form
\begin{align}
P^{[1]}_n(\lambda)
= \frac{\mi}{2}\left( \begin{array}{cc}  \lam^2 - \pi_1 \pi_3 & \pi_1 \lam - \pi_2 \lam^{-1} \\[1mm]
\pi_3 \lam - \pi_4 \lam^{-1}  & \pi_2 \pi_4 - \lam^{-2}   \end{array} \right),   \label{PN1b}
\end{align}
Comparing Eqs.~(\ref{PN1a}) and~(\ref{PN1b}) and using Eqs.~\eref{UndetCoeff} together with~\eref{old-system-2a},
we can express $\pi_i$'s ($1\leq i\leq 4$) as
\begin{subequations}
\label{Pi-equations}
\begin{align}
& \pi_1 = - \frac{\Delta_n R_n + \left(\lambda_1^2-\bar{\lambda}_1^2\right)f_{n}\bar{g}_{n}}
   { \bar\Delta_n},  \label{Pi-equations-A} \\[1mm]
&  \pi_2= - \frac{|\lam_1|^2 \bar\Delta_n Q_n + \left(\lambda_1^2-\bar{\lambda
  }_1^2\right)f_{n} \bar{g}_{n} }
   {|\lam_1|^2 \Delta_n},  \label{Pi-equations-B} \\[1mm]
& \pi_3 = - \frac{\bar\Delta_n\bar{R}_n
+ \left(\bar{\lambda}^2_1-\lambda^2_1\right)   \bar{f}_{n}g_{n}}
{\Delta_n}, \label{Pi-equations-C} \\[1mm]
&  \pi_4 = - \frac{|\lambda_1|^2\Delta_n \bar{Q}_n + \left(\bar{\lambda}^2_1-\lambda^2_1\right)\bar{f}_{n} g_{n} }{|\lambda_1|^2 \bar\Delta_n},
\label{Pi-equations-D}
\end{align}
\end{subequations}
where Wolfram's MATHEMATICA has been used for simplification. It is obvious from (\ref{Pi-equations})
that $\bar{\pi}_1 = \pi_3$ and $\bar{\pi}_2 = \pi_4$. As a result, we conclude that
$P_n^{[1]}(\lam)$ in (\ref{PN1a}) is the same as that of $P_n(\lam)$ in (\ref{LP1b})
with the correspondence: $R_n^{[1]}=\pi_1$ and $Q_n^{[1]}=\pi_2$. Thus, Eqs.~(\ref{PoTr2})--(\ref{PoTr3}) follow from
the transformation formulas (\ref{Pi-equations-A})--(\ref{Pi-equations-B}). 
Theorem \ref{theorem-main} is proven with the algorithmic computations.

\section{Soliton solutions on zero background}
\label{sec-soliton-zero}

Here we use the one-fold Darboux transformation of Theorem \ref{theorem-main} and
construct soliton solutions on zero background. Hence we take zero potentials
$(U,R,Q) = (0,0,0)$ in the transformation formula (\ref{PoTr}) and obtain
\begin{subequations}
\begin{align}
& U^{[1]}_n = -\frac{2\,\mi (\lam_1^2 - \bar{\lam}_1^2) f_n \bar{g}_n}{2\,\mi (\lam_1 |f_n|^2 + \bar{\lam}_1 |g_n|^2) -
h |\lam_1|^2 (\bar{\lam}_1 |f_n|^2 + \lam_1 |g_n|^2)},  \label{PoTr1a}  \\
& R^{[1]}_{n} = -\frac{(\lambda_1^2-\bar{\lambda}_1^2) f_{n} \bar{g}_{n}}{\lambda_1 |f_{n}|^2 + \bar{\lambda}_1 |g_{n}|^2}, \label{PoTr2a}\\
& Q^{[1]}_{n} = -\frac{(\lambda_1^2-\bar{\lambda}_1^2) f_{n} \bar{g}_{n}}{|\lam_1|^2(\bar{\lam}_1 |f_{n}|^2+\lambda_1 |g_{n}|^2)}, \label{PoTr3a}
\end{align}
\label{PoTra}
\end{subequations}
where $\Phi_n(\lam_1) = (f_n,g_n)^T$ is a nonzero solution of the Lax pair (\ref{LP1}) with $\lam = \lam_1$ at
the zero background. First, we prove that the zero background is linearly stable in the semi-discrete MTM system (\ref{MTM-discrete}).
Next, we construct Jost solutions of the Lax pair (\ref{LP1}) at the zero background.
At last, we obtain and study the exact expressions for one-soliton and two-soliton solutions.

\subsection{Stability of zero background}

Linearization of the semi-discrete MTM system (\ref{MTM-discrete}) at the zero background is written
as the linear system
\begin{equation}
\label{MTM-linear}
\left\{ \begin{array}{l}
\displaystyle
4 \mi \frac{d u_n}{dt} + q_{n+1} + q_n  + \frac{2 \mi}{h} (r_{n+1}-r_n) = 0, \\
\displaystyle
q_{n+1} - q_n + \mi h u_n = 0, \\
\displaystyle
r_{n+1} + r_n - 2 u_n = 0. \end{array} \right.
\end{equation}
Thanks to the linear superposition principle, we use the discrete
Fourier transform on the lattice,
\begin{equation}
\label{Fourier}
u_n = \frac{1}{2\pi} \int_{-\pi}^{\pi} \hat{u}(\theta) e^{\mi n \theta} d \theta, \quad n \in \mathbb{Z},
\end{equation}
invert the second and third equations of the differential-difference system (\ref{MTM-linear}),
and obtain the following differential equation with parameter $\theta \in (-\pi,\pi)\backslash \{0\}$:
\begin{equation}
\label{closed-eq-zero}
h \frac{d \hat{u}}{d t} = \left( \frac{h^2}{4} \frac{e^{\mi \theta} + 1}{e^{\mi \theta} - 1} - \frac{e^{\mi \theta} -1}{e^{\mi \theta} +1} \right) \hat{u}.
\end{equation}
Separating variables in $\hat{u} = \hat{u}_0(\theta) e^{-\mi t \omega(\theta)}$ 
yields the dispersion relation for the Fourier mode $\hat{u}_0(\theta)$:
\begin{equation}
\omega(\theta) = \frac{1}{h \sin \theta} \left[ \left( \frac{h^2}{4} + 1 \right) + \left( \frac{h^2}{4} - 1 \right) \cos \theta \right], \quad \theta \in (-\pi,\pi) \backslash \{0\}.
\label{dispersion}
\end{equation}
Since $\omega(\theta) \in \mathbb{R}$ for every $\theta \in (-\pi,\pi) \backslash \{0\}$, the zero background is
linearly stable. Note however that $|\omega(\theta)| \to \infty$ as $\theta \to 0$ and $\theta \to \pm \pi$.
Divergences of the dispersion relation in (\ref{dispersion}) as $\theta \to 0$ and $\theta \to \pm \pi$ are
related to inversion of the second and third difference equations in the linear system (\ref{MTM-linear}).

\subsection{Solutions of the Lax pair (\ref{LP1}) at zero background}

Lax pair~\eref{LP1} at the zero background is decoupled into two systems which admit the following
two linearly independent solutions:
\begin{equation}
\label{Jost}
[\Phi_+(\lam)]_n(t) = \alpha \left( \begin{array}{c} 1 \\
0 \end{array} \right) \mu_+^n e^{\frac{\mi \lambda^2}{2}  t}, \quad
[\Phi_-(\lam)]_n(t) = \beta \left( \begin{array}{c} 0 \\
1 \end{array} \right) \mu_-^n e^{-\frac{\mi}{2 \lambda^2}t},
\end{equation}
where $\alpha,\beta \in \mathbb{C}\backslash\{0\}$ are parameters and
$$
\mu_{\pm}(\lam) := \frac{2\,\mi}{h \lam} \pm \lam.
$$
We say that $\Phi(\lam)$ is the Jost function if $\lam \in \mathbb{C}$ yields either $|\mu_+(\lam)| = 1$ or $|\mu_-(\lam)| = 1$, in which case
one of the two fundamental solutions in (\ref{Jost})
is bounded in the limit $|n| \to \infty$. Constraints $|\mu_{\pm}(\lam)| = 1$
for $\lambda = |\lambda| e^{\mi \theta/2}$ in the polar form
are equivalent to the following equation:
\begin{equation}
\label{transc}
|\lambda|^2 \pm \frac{4}{h} \sin(\theta) + \frac{4}{h^2 |\lambda|^2} = 1.
\end{equation}
Roots of Eq.~(\ref{transc}) in the complex plane for $\lambda \in \mathbb{C}$ are
shown on Fig. \ref{fig-plane} for $h < 4$ (left) and $h > 4$ (right).
For every $\lambda$ on each curve of the Lax spectrum,
there exists one Jost function in (\ref{Jost}) which remains bounded in the limit $|n| \to \infty$.
On the other hand, thanks to the time dependence in (\ref{Jost}),
Jost functions remain bounded also in the limit $|t| \to \infty$
if and only if $\lambda^2 \in \mathbb{R}$.
No such Jost functions exist for $h < 4$ as is seen from the left panel of Fig. \ref{fig-plane}.
In other words, all Jost functions diverge exponentially either as $t \to -\infty$ or
as $t \to +\infty$ if $h < 4$.

\begin{figure}[h!]
 \centering
\includegraphics[width = 3in, height = 2in]{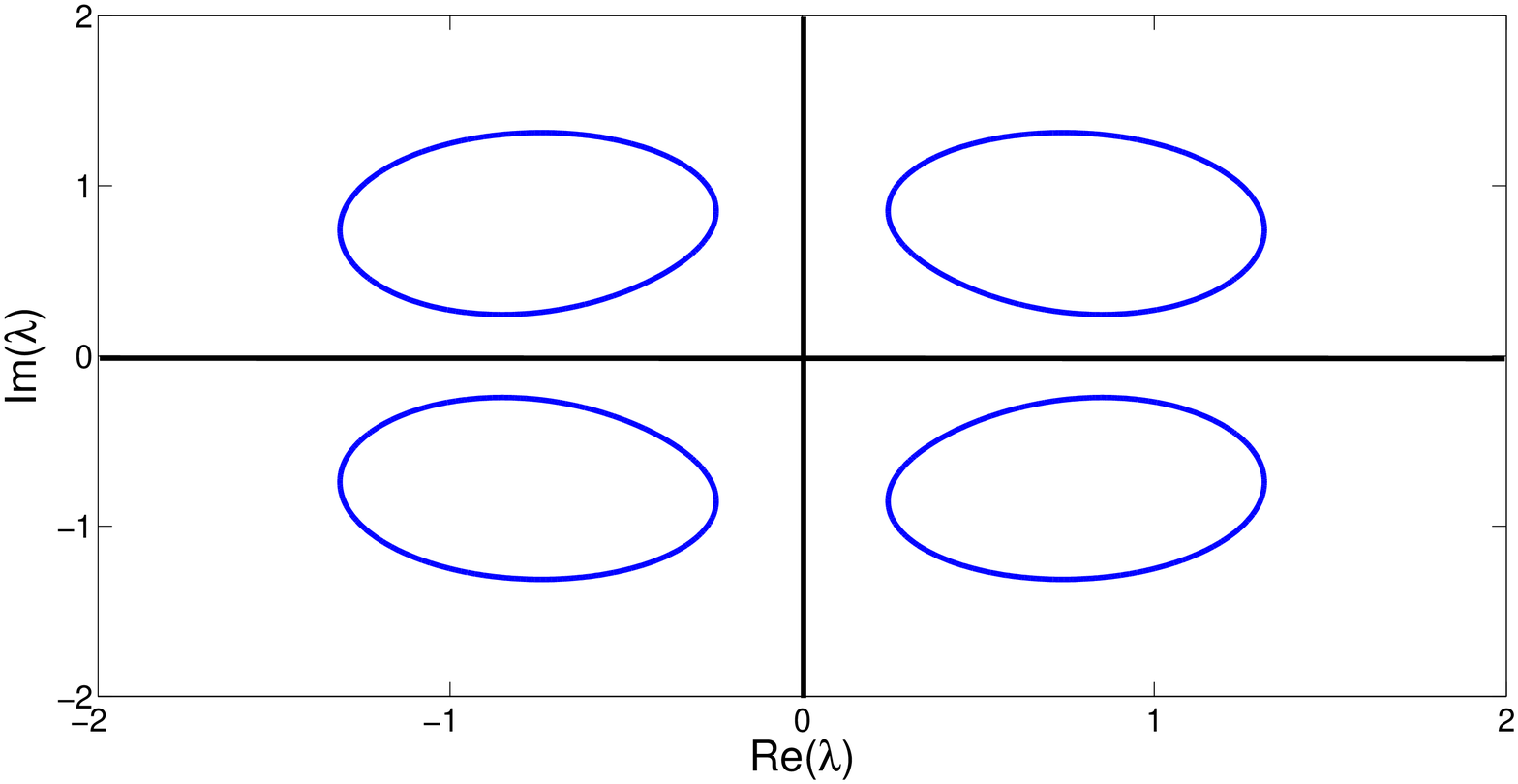}
\includegraphics[width = 3in, height = 2in]{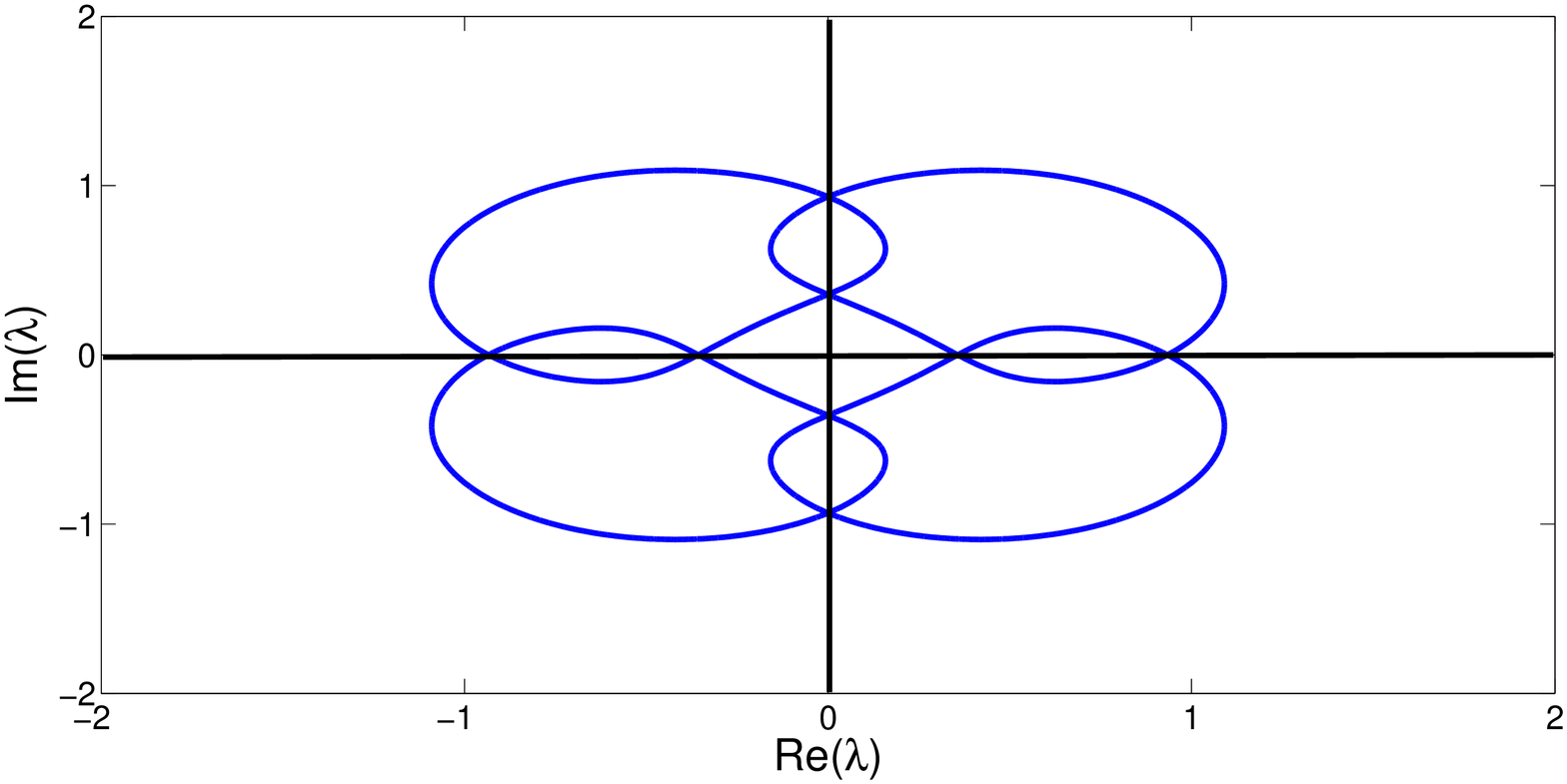}
\caption{Solutions to the transcendental equation (\ref{transc}) in
the complex plane for $h = 2$ (left) and $h = 6$ (right). Each curve encloses
a point $\lambda_0$ where either $\mu_+(\lambda_0) = 0$ or $\mu_-(\lambda_0) = 0$.} \label{fig-plane}
\end{figure}

\subsection{One-soliton solutions}

Fix $\lam_1 \in \mathbb{C}$ such that $\mu_{\pm}(\lam_1) \neq 0$ and $\lam_1^2 \notin \mathbb{R}$.
Taking a general solution for $\Phi(\lam_1) = (f,g)^T$, we write $f$ and $g$ in the form:
\begin{equation}
f_n(t) = \alpha_1 e^{\xi_{1,n}(t)}, \quad g_n(t) = \beta_1 e^{\eta_{1,n}(t)},
\label{LPsol1}
\end{equation}
where
\begin{equation}
\label{xi-eta}
\xi_{1,n}(t) = n \log \Big(\lam_1+\frac{2 \mi}{h\lam_1}\Big)+\frac{\mi}{2} \lam_1^2 t, \quad
\eta_{1,n}(t) = n \log \Big(-\lam_1+\frac{2 \mi}{h \lam_1}\Big)-\frac{\mi t}{2 \lam_1^2},
\end{equation}
and $\alpha_1,\beta_1 \in \mathbb{C}\backslash\{0\}$ are parameters. Without loss of generality,
we set $\lambda_1 = \delta_1 e^{\mi \theta_1/2}$ with some $\delta_1 > 0$ and $\theta_1 \in (0,\pi)$.
Substituting Eq.~\eref{LPsol1} into Eqs.~\eref{PoTra} yields the exact one-soliton solution
in the form:
\begin{subequations}
\begin{align}
& U^{[1]}_n = -\frac{4 \mi \delta_1 \alpha_1 \bar{\beta}_1 \sin \theta_1 e^{\mi \theta_1/2}}{
|\beta_1|^2 (2 + \mi h \delta_1^2 e^{\mi \theta_1}) e^{\eta_{1,n} - \xi_{1,n}} +
|\alpha_1|^2 (2 e^{\mi \theta_1} + \mi h \delta_1^2 ) e^{-\bar{\eta}_{1,n} + \bar{\xi}_{1,n}}},  \label{PoTr1b}  \\
& R^{[1]}_{n} = -\frac{2 \mi \delta_1 \alpha_1 \bar{\beta}_1 \sin \theta_1 e^{\mi \theta_1/2}}{
|\beta_1|^2 e^{\eta_{1,n} - \xi_{1,n}} + |\alpha_1|^2 e^{-\bar{\eta}_{1,n} + \bar{\xi}_{1,n} + \mi \theta_1}}, \label{PoTr2b}\\
& Q^{[1]}_{n} = -\frac{2 \mi \alpha_1 \bar{\beta}_1 \sin \theta_1 e^{\mi \theta_1/2}}{\delta_1(
|\beta_1|^2 e^{\eta_{1,n} - \xi_{1,n} + \mi \theta_1} +
|\alpha_1|^2 e^{-\bar{\eta}_{1,n} + \bar{\xi}_{1,n}})}, \label{PoTr3b}
\end{align}
\label{PoTrb}
\end{subequations}
where
\begin{equation*}
\xi_{1,n}(t) - \eta_{1,n}(t) = n \log \Big( \frac{2 - \mi h \delta_1^2 e^{\mi \theta_1}}{2 + \mi h \delta_1^2 e^{\mi \theta_1}} \Big)
+ \frac{\mi}{2} \big( \delta_1^2 + \frac{1}{\delta_1^2} \big) \cos \theta_1 t
- \frac{1}{2} \big( \delta_1^2 - \frac{1}{\delta_1^2} \big) \sin \theta_1 t.
\end{equation*}
Fig.~\ref{Fig1a}--\ref{Fig1c} presents the one-soliton solutions (\ref{PoTrb}) for
$\alpha_1 = 1$, $\beta_1 = 1 + \mi$, $\lambda_1 = 2e^{\frac{\pi}{5}\mi}$, and $h = 1$.

\begin{figure}[h!]
 \centering
\subfigure[]{\label{Fig1a}
\includegraphics[width=2in]{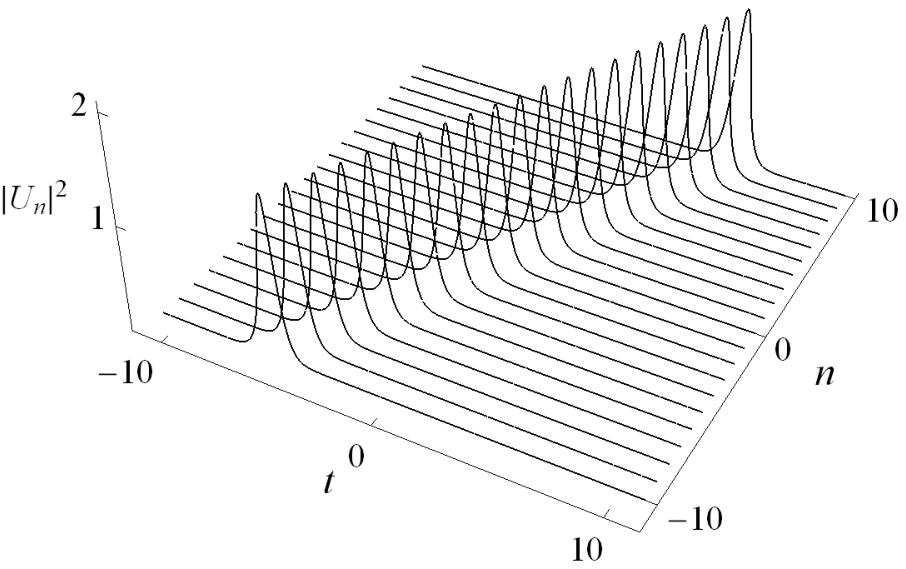}}\hfill
\subfigure[]{ \label{Fig1b}
\includegraphics[width=2in]{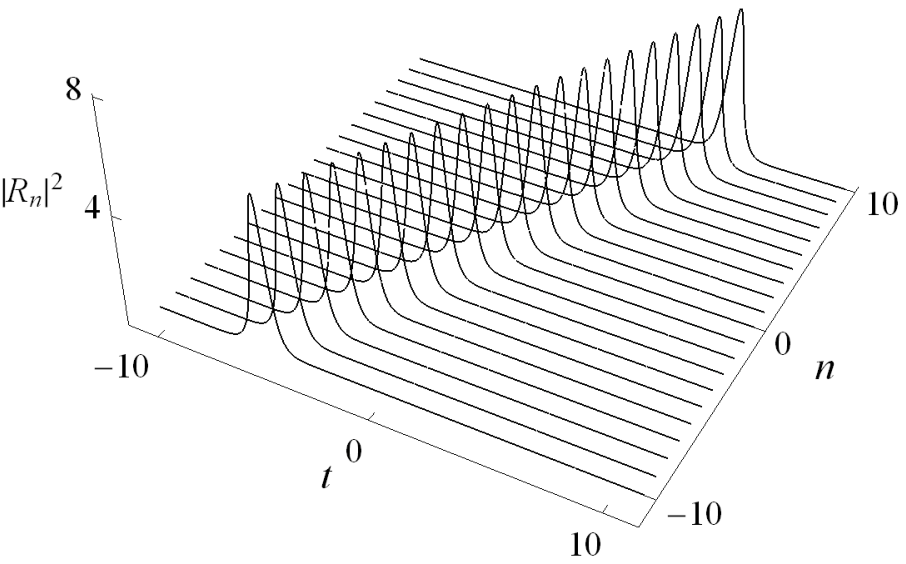}}\hfill
\subfigure[]{ \label{Fig1c}
 \includegraphics[width=2in]{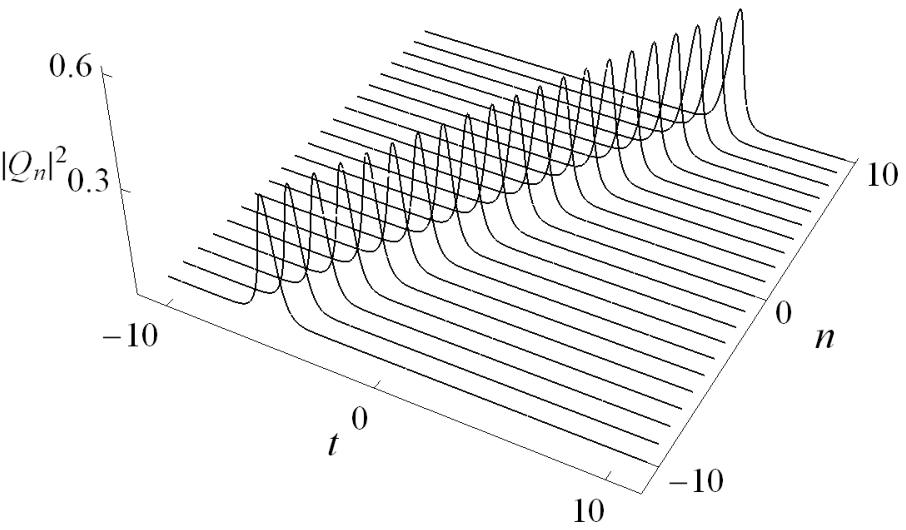}}
\caption{An example of the one-soliton solutions~\eref{PoTrb}. The following
components are shown: $|U_n|^2$ (left), $|R_n|^2$ (middle), and $|Q_n|^2$ (right).}
\end{figure}

Let us check that the discrete solitons (\ref{PoTrb}) recover solitons
of the continuous MTM system (\ref{MTM-continuous}). In order to simplify the computations, we set
$\delta_1 = 1$, which corresponds to the case of stationary solitons \cite{Yusuke1,Yusuke2}.
By defining $x_n = hn$, $n \in \mathbb{Z}$ and taking the limit $h \to 0$, we obtain
for $\delta_1 = 1$:
\begin{subequations}
\begin{align}
& U^{[1]}_n \to U(x,t) = -\frac{2 \mi \alpha_1 \bar{\beta}_1 \sin \theta_1 e^{\mi \cos \theta_1 (t-x)}}{
|\alpha_1|^2 e^{\sin \theta_1 x + \mi \theta_1/2} + |\beta_1|^2 e^{-\sin \theta_1 x - \mi \theta_1/2}},  \label{PoTr1c}  \\
& R^{[1]}_{n} \to R(x,t) = -\frac{2 \mi \alpha_1 \bar{\beta}_1 \sin \theta_1 e^{\mi \cos \theta_1 (t-x)}}{
|\alpha_1|^2 e^{\sin \theta_1 x + \mi \theta_1/2} + |\beta_1|^2 e^{-\sin \theta_1 x - \mi \theta_1/2}}, \label{PoTr2c}\\
& Q^{[1]}_{n} \to Q(x,t) = -\frac{2 \mi \alpha_1 \bar{\beta}_1 \sin \theta_1 e^{\mi \cos \theta_1 (t-x)}}{
|\alpha_1|^2 e^{\sin \theta_1 x - \mi \theta_1/2} + |\beta_1|^2 e^{-\sin \theta_1 x + \mi \theta_1/2}}, \label{PoTr3c}
\end{align}
\label{PoTrc}
\end{subequations}
which agree with the MTM solitons in the continuous system (\ref{MTM-continuous}).
Parameters $\alpha_1, \beta_1 \in \mathbb{C}\backslash\{0\}$ determine translations in space and rotation in time,
whereas $\theta_1 \in (0,\pi)$ determines
the frequency $\omega_1 := \cos \theta_1 \in (-1,1)$ of the continuous MTM solitons.
In the limit $\omega_1 \to 1$ ($\theta_1 \to 0$), the MTM soliton (\ref{PoTrc}) degenerates to
the zero solution, whereas in the limit $\omega_1 \to -1$ ($\theta_1 \to \pi$) and $\alpha_1=\beta_1$,
it becomes the algebraic solitons:
\begin{equation}
U(x,t) \to U_a(x,t) = -\frac{e^{-\mi (t-x)}}{x - \mi/2}.
\label{AlgSoliton}
\end{equation}

Discrete solitons (\ref{PoTrb}) enjoy the same properties as the continuous solitons.
In particular, let us recover the discrete algebraic soliton for the case $\alpha_1 = \beta_1$ and
$\delta_1 = 1$ in the limit $\theta_1 \to \pi$. By setting $\theta_1 = \pi - \epsilon$
and expanding to the first order in $\epsilon$, we obtain from (\ref{PoTr1b})
$$
U^{[1]}_n = \frac{4 ( \epsilon + \mathcal{O}(\epsilon^2))  e^{- \mi t}}{
(2 - \mi h - \epsilon h + \mathcal{O}(\epsilon^2)) \left(\frac{2 - \mi h + \mi \epsilon (2 + \mi h)/2
+ \mathcal{O}(\epsilon^2)}{2+ \mi h + \mi \epsilon (2 - \mi h)/2 + \mathcal{O}(\epsilon^2)} \right)^n
- (2 - \mi h - 2 \mi \epsilon + \mathcal{O}(\epsilon^2)) \left(\frac{2 - \mi h - \mi \epsilon (2 + \mi h)/2
+ \mathcal{O}(\epsilon^2)}{2+ \mi h - \mi \epsilon (2 - \mi h)/2 + \mathcal{O}(\epsilon^2)} \right)^n}.
$$
This expression yields in the limit $\epsilon \to 0$ the discrete algebraic soliton
\begin{equation}
U^{[1]}_n \to [U_a]_n= -\frac{4 e^{- \mi t}}{\frac{8 n h (2 - \mi h)}{4 + h^2} - 2 \mi + h}
\left(\frac{2 + \mi h}{2 - \mi h} \right)^n.
\label{AlgSoliton-Discrete}
\end{equation}
If $x_n = hn$, $n \in \mathbb{Z}$,
the discrete algebraic soliton (\ref{AlgSoliton-Discrete}) reduces in the limit $h \to 0$ to the continuous algebraic soliton
(\ref{AlgSoliton}). Similarly, one can prove that the discrete soliton (\ref{PoTrb}) 
degenerates to the zero solution in the limit $\theta_1 \to 0$.

\subsection{Two-soliton solutions}

In order to construct the two-soliton solutions, one needs to use the one-fold Darboux transformation
(\ref{PoTr}) twice. Fix $\lam_1, \lam_2 \in \mathbb{C}\backslash\{0\}$ such that $\mu_{\pm}(\lam_{1,2}) \neq 0$,
$\lam_{1,2}^2 \notin \mathbb{R}$, $\lam_2 \neq \pm \lam_1$, and $\lam_2 \neq \pm \bar{\lam}_1$.
A general solution of the Lax pair~\eref{LP1} with $\lam = \lam_1$ and $\lam=\lam_2$ at zero background is written
in the form
\begin{equation}
[\Phi(\lam_1)]_n(t) = \left( \begin{array}{c} \alpha_1 e^{\xi_{1,n}(t)} \\
 \beta_1 e^{\eta_{1,n}(t)} \end{array} \right), \quad
[\Phi(\lam_2)]_n(t) = \left( \begin{array}{c} \alpha_2 e^{\xi_{2,n}(t)} \\
 \beta_2 e^{\eta_{2,n}(t)} \end{array} \right), \label{LPsol2}
\end{equation}
where $\xi_{j,n}$ and $\eta_{j,n}$ with $j = 1,2$ are given by (\ref{xi-eta}) for $\lambda_{1,2}$,
and $\alpha_{1,2},\beta_{1,2} \in \mathbb{C}\backslash\{0\}$ are parameters.

By using the one-fold Darboux transformation~\eref{PoTr} with zero potentials, $\lambda = \lambda_1$,
and $\Phi = \Phi(\lam_1)$, we obtain the one-soliton solutions $(U^{[1]}_n, R^{[1]}_n, Q^{[1]}_n)$ in the form (\ref{PoTrb}).
The transformed eigenfunction $\Phi^{[1]}(\lam_2) = T^{[1]}(\lam_2) \Phi(\lam_2)$ satisfies the
Lax pair~\eref{LP1} with the potentials $(U^{[1]}_n, R^{[1]}_n, Q^{[1]}_n)$  and $\lam=\lam_2$.
By using the one-fold Darboux transformation~\eref{PoTr} with
$(U_n, R_n, Q_n)$ replaced by $(U^{[1]}_n, R^{[1]}_n, Q^{[1]}_n)$,
$\lam_1$ replaced by $\lam_2$, and $\Phi(\lam_1)$ replaced by
$\Phi^{[1]}(\lam_2)$,
we obtain the two-soliton solutions $(U^{[2]}_n\,, R^{[2]}_n\,,Q^{[2]}_n)$ in the explicit form (which is not given here).

Fig.~\ref{Fig2a}--\ref{Fig2c} shows the two-soliton solutions for $\alpha_1 = 1$, $\beta_1 = 1 + \mi$, $\alpha_2 = 1$, $\beta_2 = 1$,
$\lambda_1 = \sqrt{3} e^{\mi \pi/6}$, $\lambda_2 = \sqrt{5} e^{ \frac{\mi\arctan 2}{2}}$, and $h = 1$.
The two-soliton solutions feature elastic collisions of two individual solitons with preservation of their shapes. 
Such collisions are very common in integrable equations including the continuous MTM system (\ref{MTM}).

\begin{figure}[h!]
 \centering
\subfigure[]{\label{Fig2a}
\includegraphics[width=2in]{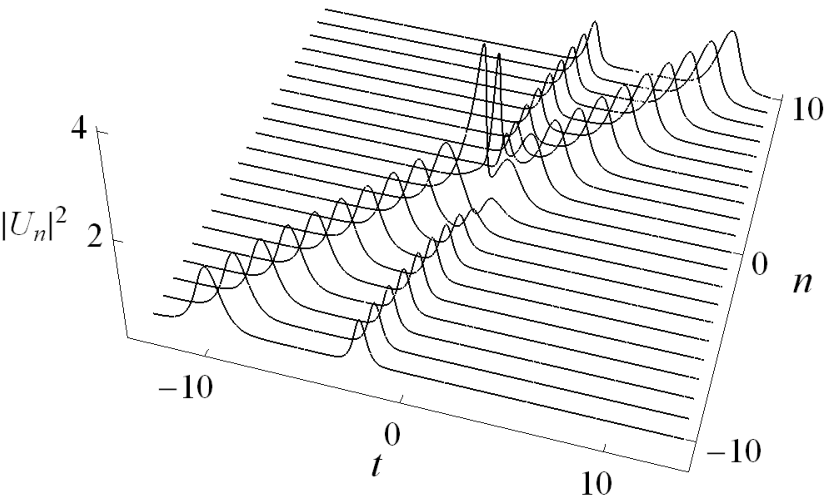}}\hfill
\subfigure[]{ \label{Fig2b}
\includegraphics[width=2in]{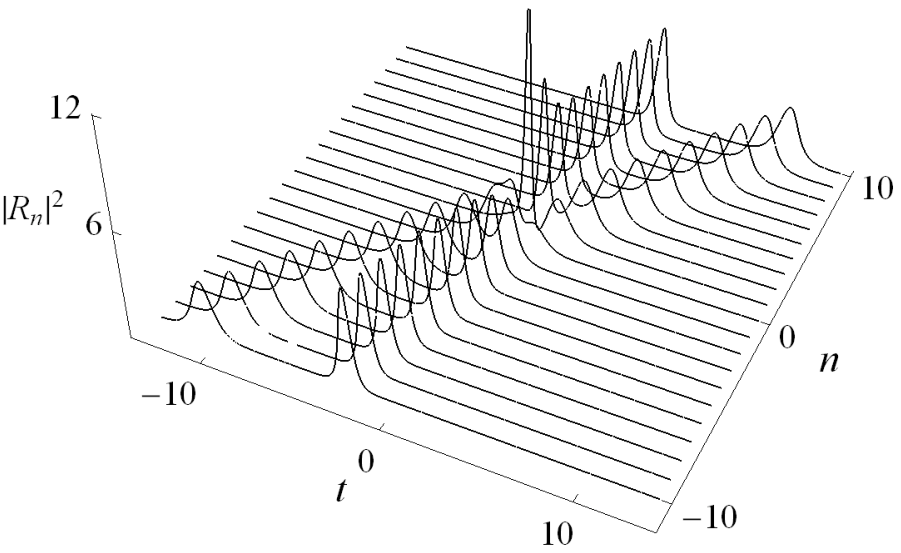}}\hfill
\subfigure[]{ \label{Fig2c}
 \includegraphics[width=2in]{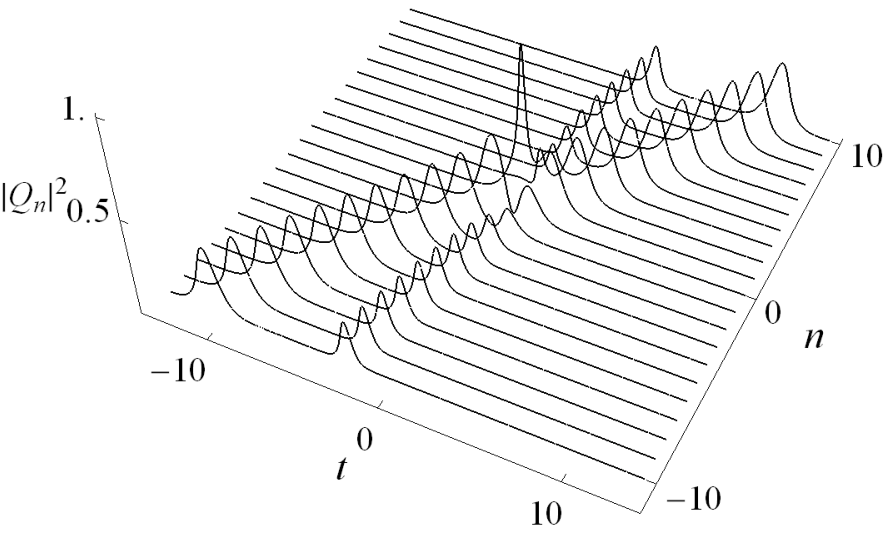}}
\caption{An example of the two-soliton solutions.}
\end{figure}

\section{Soliton solutions on nonzero constant background}

Here we use the one-fold Darboux transformation of Theorem \ref{theorem-main} and
construct soliton solutions on nonzero constant background
$(U,R,Q) = (\rho,\rho,\rho^{-1})$, where $\rho > 0$ is a real parameter.
Similarly to Section 3, we prove that the nonzero constant background is linearly stable
in the semi-discrete MTM system (\ref{MTM-discrete}) for every $\rho > 0$,
construct Jost solutions of the Lax pair (\ref{LP1}) at nonzero constant background,
and then finally obtain the exact expressions for one-soliton solutions.

\subsection{Stability of nonzero constant background}

Linearization of the semi-discrete MTM system (\ref{MTM-discrete}) at the nonzero constant
background $(U,R,Q) = (\rho,\rho,\rho^{-1})$ with $\rho > 0$ yields the linear system of equations:
\begin{equation}
\label{MTM-linear-nonzero}
\left\{ \begin{array}{l}
\displaystyle
4 \mi \frac{d u_n}{dt} + 2 \left(\rho^2 - \frac{1}{\rho^2} \right) u_n
+ \left( 1 + \frac{\mi h \rho^2}{2} \right) \left( \frac{2 \mi}{h} r_{n+1} - \bar{q}_{n+1} \right)
- \left( 1 - \frac{\mi h \rho^2}{2} \right) \left( \frac{2 \mi}{h} r_n + \bar{q}_n \right) = 0, \\
\displaystyle
 \left( 1 + \frac{\mi h \rho^2}{2} \right) \bar{q}_{n+1} -  \left( 1 - \frac{\mi h \rho^2}{2} \right) \bar{q}_n
 + \mi h u_n = 0, \\
\displaystyle
 \left( 1 + \frac{\mi h \rho^2}{2} \right) r_{n+1} +  \left( 1 - \frac{\mi h \rho^2}{2} \right) r_n - 2 u_n = 0. \end{array} \right.
\end{equation}
By using the discrete Fourier transform on the lattice (\ref{Fourier}), we close the linear system (\ref{MTM-linear-nonzero}) at
the following differential equation with parameter $\theta \in (-\pi,\pi)$:
\begin{equation}
\label{closed-eq}
\mi h \frac{d \hat{u}}{d t} + \frac{h}{2} \left(\rho^2 - \frac{1}{\rho^2} \right) \hat{u}
+ \left( \frac{h^2}{4} \frac{\cos \frac{\theta}{2} - \frac{h \rho^2}{2} \sin \frac{\theta}{2}}{\sin \frac{\theta}{2} + \frac{h \rho^2}{2} \cos \frac{\theta}{2}}
- \frac{\sin \frac{\theta}{2} + \frac{h \rho^2}{2} \cos \frac{\theta}{2}}{\cos \frac{\theta}{2} -
 \frac{h \rho^2}{2} \sin \frac{\theta}{2}} \right) \hat{u} = 0.
\end{equation}
The dispersion relation following from linear equation (\ref{closed-eq}) is purely real, which
implies that the nonzero constant background is linearly stable for every $\rho > 0$.
Note that the linear equation (\ref{closed-eq}) does not reduce to equation (\ref{closed-eq-zero})
in the limit $\rho \to 0$ because the nonzero constant background $(U,R,Q) = (\rho,\rho,\rho^{-1})$
is singular in this limit, hence the variable $q$ in the linearized system (\ref{MTM-linear})
is replaced by $\bar{q}$ in the system (\ref{MTM-linear-nonzero}).

Note that $(u,v) = (\rho,\rho^{-1})$ is also the nonzero constant solution of the continuous MTM system (\ref{MTM}). 
However, computations similar to those in (\ref{MTM-linear-nonzero}) and (\ref{closed-eq}) 
show that the nonzero constant background for any $\rho > 0$ is modulationally unstable. This is different 
from the conclusion on the nonzero constant background in the semi-discrete MTM system (\ref{MTM-discrete}). 

\subsection{Solutions of the Lax pair (\ref{LP1}) at nonzero constant background}

Solving Lax pair~\eref{LP1} with the potentials $(U,R,Q) = (\rho,\rho,\rho^{-1})$, we have two linearly independent solutions:
\begin{equation}
\label{Jost-nonzero}
[\Phi_+(\lam)]_n(t) = \alpha \left( \begin{array}{c} \rho \\
-\lambda \end{array} \right) \mu_+^n e^{\frac{\mi}{2} \left( \frac{1}{\rho^2} - \rho^2\right) t}, \quad
[\Phi_-(\lam)]_n(t) = \beta \left( \begin{array}{c} \lambda \\
\rho \end{array} \right) \mu_-^n e^{\frac{\mi}{2} \left( \lam^2 - \frac{1}{\lam^2} \right) t},
\end{equation}
where $\alpha,\beta \in \mathbb{C}\backslash\{0\}$ are parameters and
$$
\mu_+(\lam) := \left( \frac{2\,\mi}{h \lam} - \lam \right) \frac{1 + \frac{\mi h \rho^2}{2}}{1 - \frac{\mi h \rho^2}{2}}, \quad
\mu_-(\lam) := \frac{2\,\mi}{h \lam} + \lam.
$$
Similarly to the case of zero potentials, we say that $\Phi(\lam)$ is a Jost function
if $\lambda \in \mathbb{C}$ yields either $|\mu_+(\lam)| = 1$ or $|\mu_-(\lam)| = 1$.
Interestingly, the constraints $|\mu_{\pm}(\lam)| = 1$ with $\lambda = |\lambda| e^{\mi \theta/2}$ yield
the same equation (\ref{transc}).
Hence, any point on each curve of the Lax spectrum shown on Fig. \ref{fig-plane} gives
one Jost function in (\ref{Jost-nonzero}) which remains bounded in the limit $|n| \to \infty$.
The function of $\Phi_+(\lam)$ is always bounded in the limit $|t| \to \infty$ since $\rho > 0$.
On the other hand, $\Phi_-(\lam)$ is bounded as $|t| \to \infty$  if and only if $\lambda^2 \in \mathbb{R}$,
and no such Jost functions exist for $\Phi_-(\lam)$ if $h<4$.

\subsection{One-breather solutions}

Fix $\lam_1 \in \mathbb{C}$ such that $\mu_{\pm}(\lam_1) \neq 0$ and $\lam_1^2 \notin \mathbb{R}$.
Let $\Phi(\lam_1) = (f,g)^T$ be the general solution of Lax pair~\eref{LP1} with $(U,R,Q) = (\rho,\rho,\rho^{-1})$ and $\lam=\lam_1$.
We write $f$ and $g$ in the form
\begin{equation}
f_{1,n} = \alpha_1 \rho\, e^{\mu_{1,n}(t)} +  \beta_1 \lambda_1 e^{\nu_{1,n} (t)}, \quad
g_{1,n} = - \alpha_1 \lambda_1  e^{\mu_{1,n}(t)}  + \beta_1\rho\, e^{\nu_{1,n}(t)},
\label{LPsol3}
\end{equation}
with
\begin{align*}
& \mu_{1,n}(t)= n \log\left[ \left( \frac{2\,\mi}{h \lam_1} - \lam_1 \right) \frac{1 + \frac{\mi h \rho^2}{2}}{1 - \frac{\mi h \rho^2}{2}}\right] +\frac{\mi}{2}\left(\frac{1}{\rho ^2}-\rho ^2\right) t, \\[1.5mm]
& \nu_{1,n}(t)= n \log \left(\lambda_1+\frac{2 \mi}{h \lambda_1}\right)+\frac{\mi}{2}
   \left(\lambda_1^2-\frac{1}{\lambda_1^2}\right) t,
\end{align*}
where $\alpha_1,\beta_1 \in \mathbb{C}\backslash\{0\}$ are parameters.
Substituting Eq.~\eref{LPsol3} into Eqs.~\eref{PoTr}, we obtain the one-breather solutions at nonzero constant background as follows:
\begin{subequations}
\begin{align}
& U^{[1]}_n =-\frac{|\alpha_1|^2\rho \bar{\lambda}_1 h_{\lam_1} \bar{\chi}_1  e^{\Theta_{1,n}} +
|\beta_1|^2 \rho \lambda_1 h_{\bar{\lam}_1} \chi_1 e^{-\Theta_{1,n}}
+ \bar{\alpha}_1\beta_1|\lambda_1|^2 h_{\rho} (\lambda_1^2-\bar{\lambda}_1^2)
e^{-\mi\,\Xi_{1,n}} }{|\alpha_1|^2 \lambda_1 h_{\lam_1} \bar{\chi}_1
e^{\Theta_{1,n}} + |\beta_1|^2 \bar{\lambda}_1 h_{\bar{\lam}_1}  \chi_1
e^{-\Theta_{1,n}}- \bar{\alpha}_1 \beta_1\rho h_{\rho} (\lambda_1^2-\bar{\lambda}_1^2)
e^{-\mi\,\Xi_{1,n}} }, \label{breathera} \\
& R^{[1]}_n = -\frac{|\alpha_1|^2 \rho \bar{\lam}_1 \bar{\chi}_1 e^{\Theta_{1,n}} +
|\beta_1|^2 \rho \lam_1  \chi_1 e^{-\Theta_{1,n}} - \bar{\alpha}_1\beta_1 |\lam_1|^2(\lambda_1^2-\bar{\lambda}_1^2) e^{-\mi\,\Xi_{1,n}}}{|\alpha_1|^2 \lam_1 \bar{\chi}_1 e^{\Theta_{1,n}} +
|\beta_1|^2 \bar{\lam}_1  \chi_1 e^{-\Theta_{1,n}}+ \bar{\alpha}_1\beta_1 \rho (\lambda_1^2-\bar{\lambda}_1^2) e^{-\mi\,\Xi_{1,n}}},
 \label{breatherb} \\
& Q^{[1]}_n = -\frac{|\alpha_1|^2 \bar{\lambda}_1^3 \chi_1 e^{\Theta_{1,n}} +
|\beta_1|^2\lambda_1^3 \bar{\chi}_1 e^{-\Theta_{1,n}} +  \alpha_1 \bar{\beta}_1 \rho ^3  (\lambda_1^2-\bar{\lambda}_1^2)
e^{\mi\,\Xi_{1,n}}}
{ \rho |\lambda_1|^2 \left[
|\alpha_1|^2 \bar{\lambda}_1 \chi_1 e^{\Theta_{1,n}} +
|\beta_1|^2 \lambda_1 \bar{\chi}_1 e^{-\Theta_{1,n}} -
\alpha_1 \bar{\beta}_1  \rho (\lambda_1^2-\bar{\lambda}_1^2)
e^{\mi\,\Xi_{1,n}} \right]}, \label{breatherc}
\end{align} \label{breather}
\end{subequations}
where
\begin{align*}
& \Theta_{1,n}=\Re(\mu_{1,n}-\nu_{1,n}), \quad \Xi_{1,n}=\Im(\mu_{1,n}-\nu_{1,n}), \\
& \chi_1=\rho ^2 + \lambda_1^2,  \quad \bar{\chi}_1=\rho ^2 + \bar{\lambda}_1^2, \\
& h_{\lam_1}=-2 \mi + h\lambda_1^2, \quad h_{\bar{\lam}_1}= -2 \mi + h\bar{\lambda}_1^2, \quad h_{\rho}= 2 \mi + h \rho ^2.
\end{align*}
Due to the presence of the oscillatory terms $e^{\mi\,\Xi_{1,n}}$ and $e^{-\mi\,\Xi_{1,n}}$,
solutions~\eref{breather}, in general, exhibit the localized breathers which oscillate periodically both in $n$ and $t$.
Fig.~\ref{Fig3a}--\ref{Fig3c} illustrates the one-breather solutions (\ref{breather}) at the constant background for
$\alpha_1 = 1$, $\beta_1 = 1 + \mi$, $\rho=1$, $\lambda_1 = 2 e^{\mi \pi/8}$,  and $h = 3/4$.

\begin{figure}[h!]
 \centering
\subfigure[]{\label{Fig3a}
\includegraphics[width=2in]{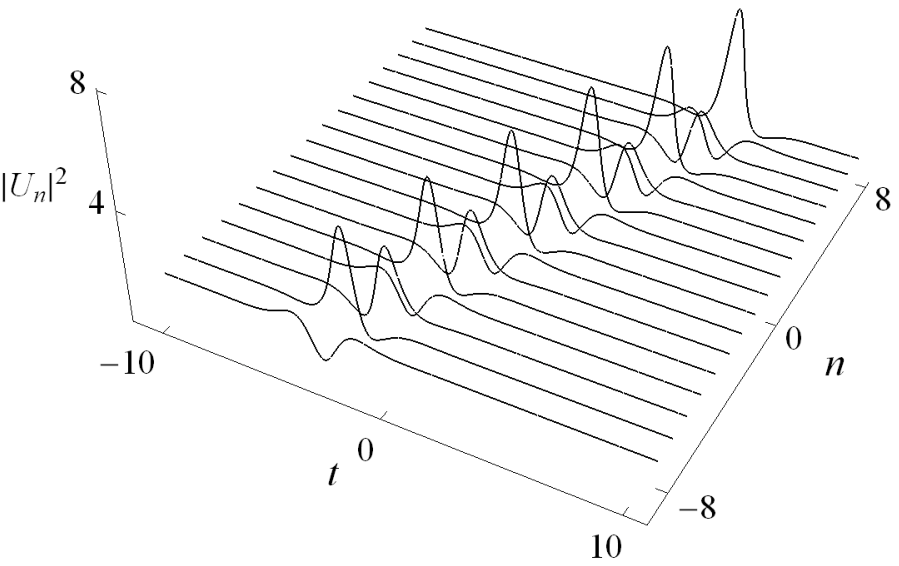}}\hfill
\subfigure[]{ \label{Fig3b}
\includegraphics[width=2in]{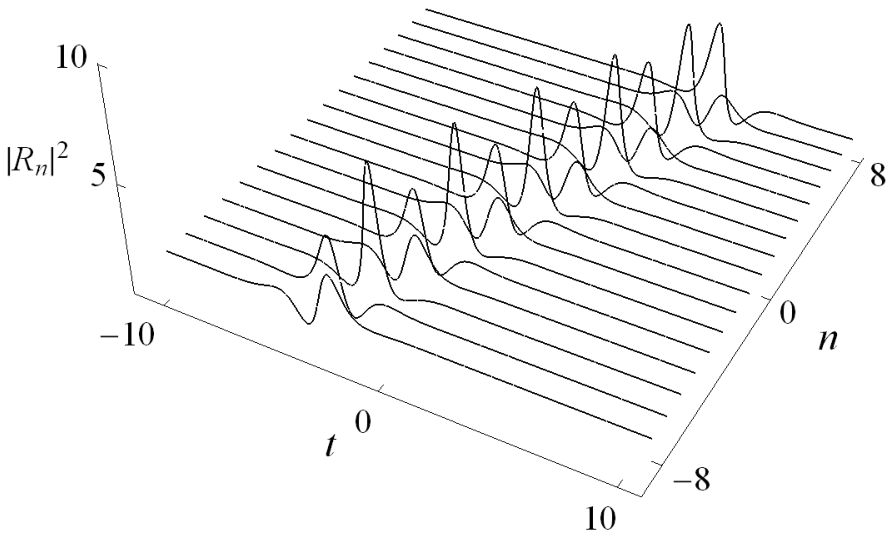}}\hfill
\subfigure[]{ \label{Fig3c}
 \includegraphics[width=2in]{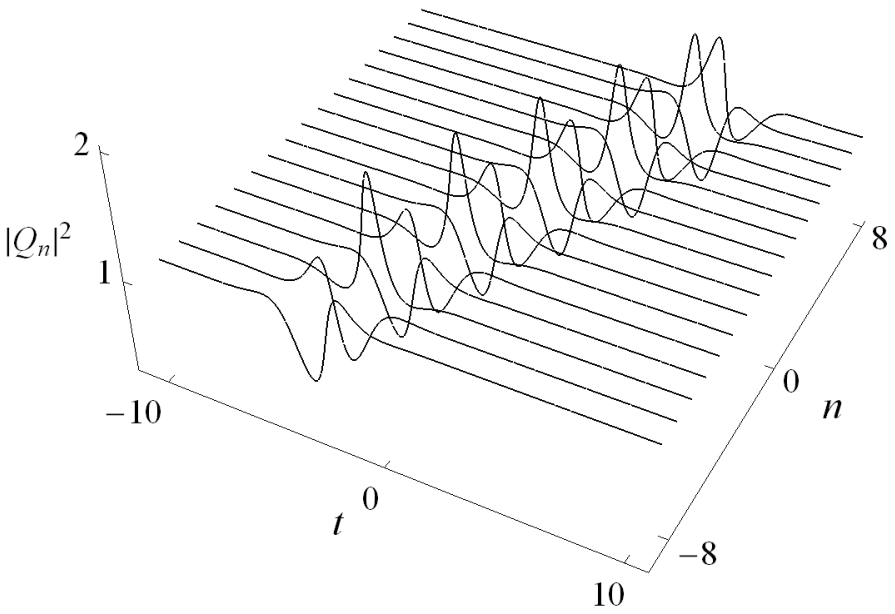}}
\caption{An example of the one-breather solutions~\eref{breather} at the nonzero background. }
\end{figure}

No periodic oscillations occur in the one-breather solutions~\eref{breather} if and only if $\Xi_{1,n}=0$.
In this case, solutions~\eref{breather} describe one-solitons illustrated on Fig.~\ref{Fig5a}--\ref{Fig5c} for $\alpha_1 = 1$, $\beta_1 = 1 + \mi$,
$\rho=\frac{2^{3/4}}{\sqrt[4]{7}}$, $\lambda_1 = \sqrt[4]{\frac{7}{6}} e^{\frac{5 \pi }{12}\mi}$,  and $h = \sqrt{3}$.

\begin{figure}[h!]
 \centering
\subfigure[]{\label{Fig5a}
\includegraphics[width=2in]{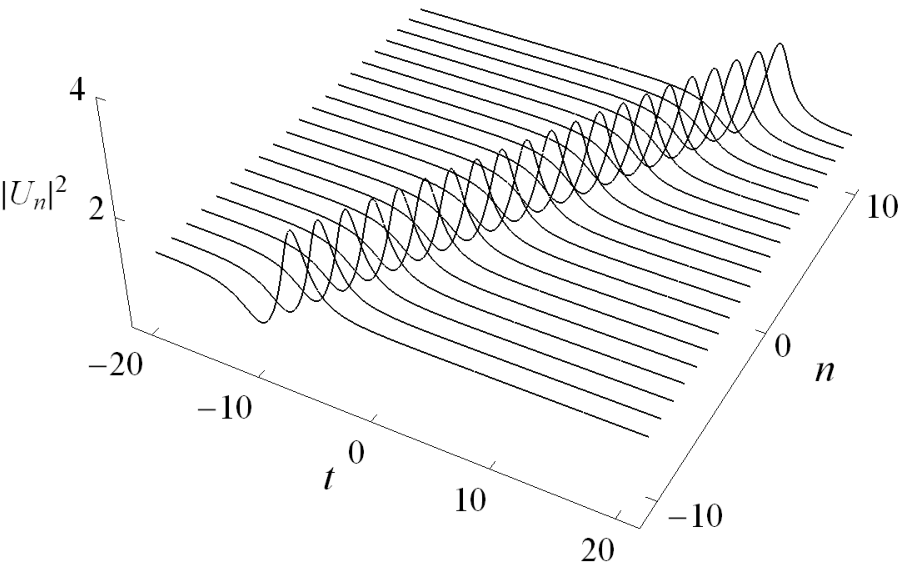}}\hfill
\subfigure[]{ \label{Fig5b}
\includegraphics[width=2in]{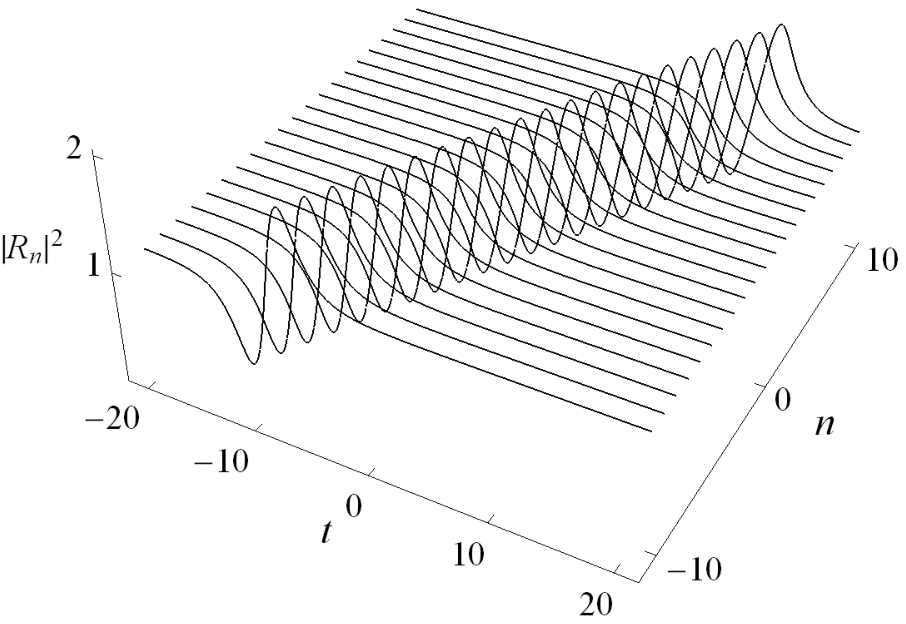}}\hfill
\subfigure[]{ \label{Fig5c}
 \includegraphics[width=2in]{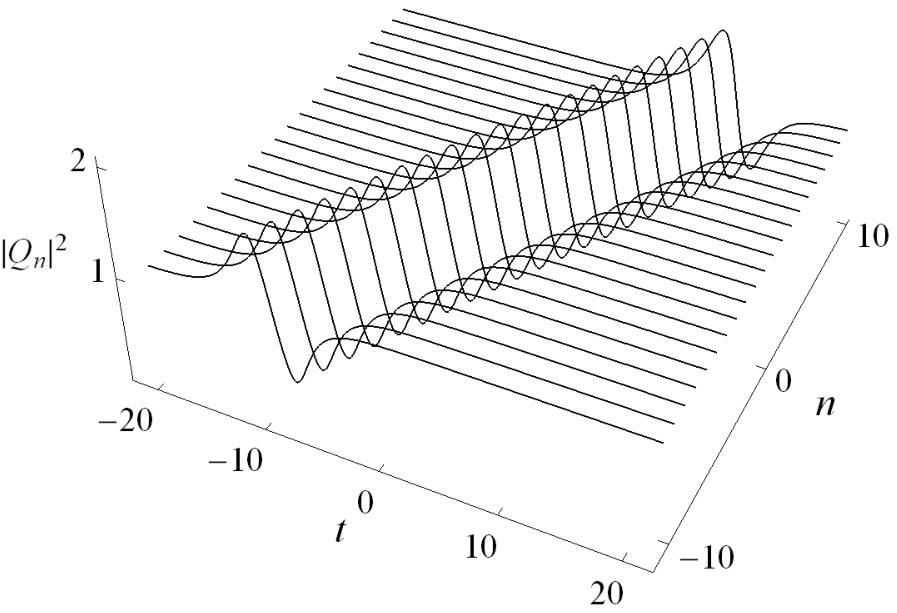}}
\caption{An example of the one-breather solutions~\eref{breather} without periodic oscillations.}
\end{figure}

We show that the one-breather solutions~\eref{breather} feature no periodic oscillations
if the modulus and argument of $\lam_1$ are given by
\begin{align}
|\lam_1| = \frac{1}{\rho}\sqrt{\frac{2}{h}}, \quad
\arg(\lam_1) = \frac{1}{2} \arccos\left(2h \frac{1-\rho ^4}{4- h^2 \rho ^4}\right)
\label{lam1}
\end{align}
in the two regions described by
\begin{align}
\mbox{\rm either } \; h > \frac{2}{\rho^4}, \;\; \rho < 1,\,\, \text{or}\,\, 0 < h < \frac{2}{\rho^4}, \;\; \rho > 1.
\label{lam2}
\end{align}
Note that the two regions intersect at $\rho = 1$, $h = 2$, for which $|\lam_1| = 1$
whereas $\arg(\lam_1)$ is not determined. In fact, we show that
$\arg(\lam_1) \in (\frac{\pi}{4}, \frac{\pi}{2})$.
The existence region for non-oscillating one-soliton solutions (\ref{breather})
on the $(h, \rho)$ plane is displayed in Fig.~\ref{Fig4}.

\begin{figure}[h!]
 \centering
\includegraphics[width = 3in]{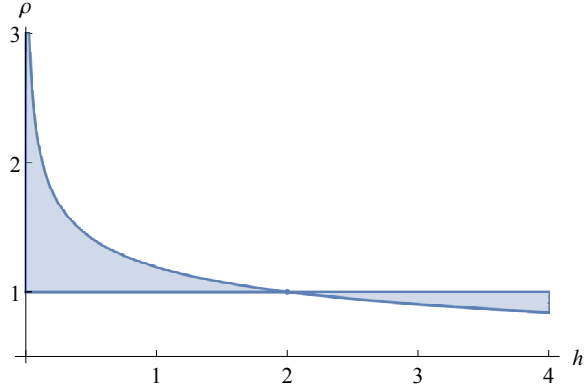}
\caption{Region on the $(h, \rho)$ plane given by ~\eref{lam2}. }
\label{Fig4}
\end{figure}

In order to verify (\ref{lam1}), we note that the condition $\Xi_{1,n}=0$
is equivalent to the system of two equations
\begin{eqnarray}
 \label{Cond1a}
 \left\{ \begin{array}{l}
\frac{2}{\rho ^2} -2 \rho ^2 -\bar{\lambda}_1^2-\lambda_1^2 + \frac{1}{\bar{\lambda}_1^2} + \frac{1}{\lambda_1^2}=0,\\
\frac{4\rho^2}{h |\lam_1|^2} - h |\lam_1|^2 \rho^2 + \frac{2}{h}\left(\frac{\bar\lam_1}{\lam_1}+ \frac{\lam_1}{\bar\lam_1} \right)\left(1-\frac{h^2\rho^4}{4}\right) =0,
\end{array} \right.
\end{eqnarray}
subject to the constraint
\begin{equation}
 \label{Cond1b}
\left(\frac{4}{h^2|\lam_1|^2} -|\lam_1|^2\right)\left(1-\frac{h^2\rho^4}{4}\right) - 2\rho^2 \left(\frac{\bar\lam_1}{\lam_1}+ \frac{\lam_1}{\bar\lam_1} \right) > 0.
\end{equation}
By using the polar form $\lambda_1 = \delta_1 e^{\mi \theta_1/2}$ with $\delta_1 > 0$ and $\theta_1 \in (0,\pi)$,
we rewrite the constraints (\ref{Cond1a})--(\ref{Cond1b}) in the form:
\begin{eqnarray}
 \label{Cond2a}
 \left\{ \begin{array}{l}
\frac{1}{\rho ^2} - \rho ^2 + \left(\frac{1}{\delta _1^2} -\delta _1^2 \right) \cos \theta_1 =0, \\
\delta _1^4 h^2 \rho ^2+\delta _1^2 \left(h^2 \rho ^4-4\right) \cos \theta_1-4 \rho ^2=0,
\end{array} \right.
\end{eqnarray}
subject to the constraint
\begin{equation}
 \label{Cond2b}
\frac{\left(\delta _1^4 h^2-4\right) \left(h^2 \rho ^4-4\right)}{4 \delta _1^2 h^2}-4 \rho ^2 \cos\theta_1  >0.
\end{equation}
Let us first assume that $\delta_1 \neq 1$, in which case the first equation in (\ref{Cond2a}) gives a unique solution
for $\theta_1$:
\begin{equation}
 \label{Cond3a}
\cos \theta_1 = \frac{\rho^2-\rho^{-2}}{\delta_1^{-2} - \delta_1^2}.
\end{equation}
Substituting (\ref{Cond3a}) into the second equation in (\ref{Cond2a}) yields
the following equation
$$
\delta_1^8 h^2 \rho^4 - \delta_1^4 (h^2 \rho^8 + 4) + 4 \rho^4 = 0
$$
with two roots $\delta_1^4 = \rho^4$ and $\delta_1^4 h^2 \rho^4 = 4$.
Since $\delta_1 = \rho$ implies $\cos \theta_1 = -1$ in (\ref{Cond3a}),
which is not admissible, we only have one positive root for $\delta_1$ given by
\begin{equation}
 \label{Cond3b}
\delta_1 = \frac{\sqrt{2}}{\rho \sqrt{h}},
\end{equation}
which implies
\begin{equation}
\label{Cond3c}
\cos \theta_1 = 2h \frac{1-\rho ^4}{4- h^2 \rho^4}
\end{equation}
thanks to (\ref{Cond3a}). Solutions (\ref{Cond3b}) and (\ref{Cond3c}) are equivalent to
(\ref{lam1}). The constraint (\ref{Cond2b}) with the solutions (\ref{Cond3b})--(\ref{Cond3c}) is rewritten in the form
$$
\frac{(1-\rho^4) (h^2 \rho^4 + 4)^2}{2 h \rho^2 (h^2 \rho^4 - 4)} > 0,
$$
from which the two regions in (\ref{lam2}) follow. In the exceptional case, $\delta_1 = 1$, we have from
the first equation in (\ref{Cond2a}) that $\rho = 1$ whereas $\cos \theta_1$ is not determined.
Then, the second equation in (\ref{Cond2a}) implies that $h = 2$ since $\cos \theta_1 = -1$ is not admissible.
The constraint (\ref{Cond2b}) yields $\cos \theta_1 < 0$ so that $\theta_1 \in \left(\frac{\pi}{2},\pi \right)$.

\section{Conclusion}

We have derived the one-fold Darboux transformation between solutions of the semi-discrete MTM system
using the Lax pair and the dressing methods. When one solution of the semi-discrete MTM system
is either zero or nonzero constant, the one-fold Darboux transformation generates one-soliton solution
on the zero or nonzero constant background respectively. When the one-fold Darboux transformation is
used recursively, it also allows us to construct two-soliton solutions and generally multi-soliton solutions.
We have showed that properties of the discrete solitons in the semi-discrete MTM system are very similar
to properties of the continuous MTM solitons.

Among further problems related to the semi-discrete MTM system, we mention construction of conserved quantities
which may clarify orbital stability of the discrete MTM solitons, similar to the work \cite{Yusuke2}. Another
direction is to develop the inverse scattering transform for solutions of the Cauchy problem
associated with the semi-discrete MTM system, similar to the work \cite{PelSaal}. Since 
numerical simulations of the semi-discrete MTM system (\ref{MTM-discrete}) present serious challenges, 
it may be interesting to look for another version of the integrable semi-discretization of the continuous MTM system (\ref{MTM}).

\vspace{0.25cm}

{\bf Acknowledgement.} The authors thank Leeor Greenblat for collaboration on numerical exploration
of the semi-discrete MTM system during an undergraduate research project. The work of TX was partially supported by the National Natural Science Foundation of China (No. 11705284) and the program of China Scholarship Council (No. 201806445009). TX also appreciates the hospitality of the Department of Mathematics \& Statistics at McMaster University during his visit in 2019. The work of DEP is supported by the State task program in the sphere
of scientific activity of Ministry of Education and Science of the Russian Federation
(Task No. 5.5176.2017/8.9) and from the grant of President of Russian Federation
for the leading scientific schools (NSH-2685.2018.5).


\begin{thebibliography}{99}




\bibitem{Alexeeva} N.V. Alexeeva,  I.V. Barashenkov  and A. Saxena,
``Spinor solitons and their PT-symmetric offspring", Ann. Phys. {\bf 403} (2019), 198--223.


\bibitem{BG87} I.V. Barashenkov and B.S. Getmanov, ``Multisoliton solutions in the
scheme for unified description of integrable relativistic massive fields.
Non-degenerate $sl(2,C)$ case", Commun. Math. Phys. {\bf 112} (1987) 423--446.

\bibitem{BG91} I.V. Barashenkov, B.S. Getmanov, and V.E. Kovtun, ``The unified approach to integrable relativistic equations:
Soliton solutions over nonvanishing backgrounds. I", J. Math. Phys. {\bf 34} (1993), 3039--3053.

\bibitem{BG93} I.V. Barashenkov and B.S. Getmanov, ``The unified approach to integrable relativistic equations:
Soliton solutions over nonvanishing backgrounds. II", J. Math. Phys. {\bf 34} (1993), 3054--3072.





\bibitem{ChenPel} J. Chen and D.E. Pelinovsky, ``Rogue periodic waves in the modified Korteweg-de Vries equation",
Nonlinearity {\bf 31} (2018), 1955--1980.

\bibitem{Yusuke1} A. Contreras, D.E. Pelinovsky, and Y. Shimabukuro, ``$L^2$ orbital stability of
Dirac solitons in the massive Thirring model",  Communications in PDEs {\bf 41} (2016), 227--255.



\bibitem{Sipe} C.M. de Sterke and J.E. Sipe, ``Gap solitons", Progress in Optics {\bf 33} (1994), 203--260.

\bibitem{Deg} A. Degasperis, S. Wabnitz, and A.B. Aceves, ``Bragg grating rogue wave",
Phys. Lett. A {\bf 379} (2015), 1067--1070.

\bibitem{Gu} C.H. Gu, H.S. Hu and Z.X. Zhou, {\em Darboux Transformation in
Integrable Systems: Theory and their Applications to Geometry} (Springer, Heidelberg, Germany, 2005).




\bibitem{Joshi-book} J. Hietarinta, N. Joshi, and F. Nijhoff, {\em Discrete Systems and Integrability}
(Cambridge University Press, Cambridge, 2016).

\bibitem{JoshiJPA}
N. Joshi, D. E. Pelinovsky, Integrable semi-discretization of the
massive Thirring system in laboratory coordinates, {J. Phys. A} {\bf 52} (2019), 03LT01 (12 pages).

\bibitem{KL} D.J. Kaup and T.I. Lakoba, "The squared eigenfunctions of the massive Thirring model in laboratory
coordinates", J. Math. Phys. {\bf 37}  (1996), 308--323.

\bibitem{KN} D.J. Kaup and A.C. Newell, "On the Coleman
correspondence and the solution of the Massive Thirring model",
Lett. Nuovo Cimento {\bf 20}  (1977), 325--331.

\bibitem{KMI} K. Kawata, T. Morishima, and H. Inoue, ``Inverse scattering method for the two-dimensional
massive Thirring model", J. Phys. Soc. Japan {\bf 47} (1979), 1327--1334.

\bibitem{KW} Y. Komori and M. Wadati, ``Massless Thirring model and Bethe ansatz wavefunction",
J. Phys. Soc. Japan {\bf 65} (1996), 722--724.


\bibitem{KM} E.A. Kuznetzov and A.V. Mikhailov, ``On the complete integrability
of the two-dimensional classical Thirring model", Theor. Math. Phys. {\bf 30} (1977), 193--200.


\bibitem{Mikhailov} A.V. Mikhailov, ``Integrability of the two-dimensional Thirring model",
JETP Lett. {\bf 23} (1976), 320--323.


\bibitem{Nijhoff1} F.W. Nijhoff, H.W. Capel, G.R.W. Quispel, and J. van der Linden, ``The derivative nonlinear Schr\"{o}dinger
equation and the massive Thirring model", Phys. Lett. A {\bf 93} (1983), 455--458.

\bibitem{Nijhoff2} F.W. Nijhoff, H.W. Capel, and G.R.W. Quispel, ``Integrable lattice version of
the massive Thirring model and its linearization", Phys. Lett. A {\bf 98} (1983), 83--86.

\bibitem{O} S. J. Orfanidis, ``Soliton solutions of the massive Thirring model and the inverse scattering transform",
Phys. Rev. D {\bf 14} (1976), 472--478.

\bibitem{PelSaal} D.E. Pelinovsky and A. Saalmann, ``Inverse scattering for the massive Thirring model",
Fields Institute Communications (2018), arXiv:1801.00039.

\bibitem{Yusuke2} D. E. Pelinovsky and Y. Shimabukuro, ``Orbital stability of Dirac solitons'',
Lett. Math. Phys. {\bf 104} (2014), 21--41.



\bibitem{Thirring} W. Thirring, ``A soluble relativistic field theory",
Annals of Physics {\bf 3} (1958), 91-–112.

\bibitem{Tsuchida} T. Tsuchida, ``Integrable discretizations of derivative nonlinear Schr\"{o}dinger
equations", J. Phys. A: Math. Gen. {\bf 35} (2002), 7827--7847.

\bibitem{Tsuchida2010} T. Tsuchida, ``A systematic method for constructing time
discretizations of integrable lattice systems: local equations of motion",
J. Phys. A: Math. Theor. {\bf 43} (2010), 415202 (22 pages).

\bibitem{Tsuchida-preprint} T. Tsuchida, ``On a new integrable discretization of
the derivative nonlinear Schr\"{o}dinger (Chen--Lee--Liu) equation", arXiv:1501.01956 (2015) (22 pages).



\bibitem{Villarroel} J. Villarroel, ``The DBAR problem and the Thirring model", Stud. Appl. Math. {\bf 84} (1991), 207--220.

\bibitem{W} M. Wadati, ``General solution and Lax pair for 1-D classical massless Thirring model",
J. Phys. Soc. Japan {\bf 52} (1983), 1084--1085.

\bibitem{XuAML2017}
T. Xu, H. J. Li, H. J. Zhang, M. Li, S. Lan, Darboux transformation and analytic solutions of the discrete PT-symmetric nonlocal nonlinear Schr\"{o}dinger equation, {Appl. Math. Lett.} {\bf 63} (2017), 88--94.







\end{thebibliography}
\end{document}